\DeclareMathAlphabet\mathpzc{OT1}{pzc}{m}{it}
\def\E{{\mathbb E}}
\def\P{{\mathbb P}}
\def\dis{{\mathcal{D}}}
\newcommand\partialderiv[3][]{\frac{\partial^{#1}#2}{\partial{#3}^{#1}}}
\let\trueiiint=\iiint
\def\iiint{\mathop{\textstyle\trueiiint}\limits}
\def\intinfty{\int\limits_{\!\!-\infty\,\,}^{\,\,\infty\!\!}\kern-0.0em}
\def\iintinfty{\mathop{\int\!\!\int}\limits_{\!\!-\infty\,\,}^{\,\,\infty\!\!}\kern-0.0em}
\def\iiintinfty{\mathop{\int\!\!\int\!\!\int}\limits_{\!\!-\infty\,\,}^{\,\,\infty\!\!}\kern-0.0em}
\def\Real{{\mathbb R}}
\def\~#1{{\-ox{\sf#1}}}
\def\bolde{{\boldsymbol\epsilon}}
\newtheorem{assumption}[theorem]{Assumption}
\def\circ{\ifmmode\mathchar"220E\else$\mathchar"220E$\fi}
\def\@#1{{\cal #1}}
\newlength{\continueindent}
\renewenvironment{algorithmic}[1][0]%
   {%
   \edef\ALG@numberfreq{#1}%
   \def\@currentlabel{\theALG@line}%
   \setcounter{ALG@line}{0}%
   \setcounter{ALG@rem}{0}%
   \let\\\algbreak%
   \expandafter\edef\csname ALG@currentblock@\theALG@nested\endcsname{0}%
   \expandafter\let\csname ALG@currentlifetime@\theALG@nested\endcsname\relax%
   \begin{list}%
      {\ALG@step}%
      {%
      \rightmargin\z@%
      \itemsep\z@ \itemindent\z@ \listparindent2em%
      \partopsep\z@ \parskip\z@ \parsep\z@%
      \labelsep 0.5em \topsep 0.2em%\skip 1.2em 
      \ifthenelse{\equal{#1}{0}}%
         {\labelwidth 0.5em}%
         {\labelwidth 1.2em}%
       \leftmargin\labelwidth \addtolength{\leftmargin}{\labelsep}
      \ALG@tlm\z@%
      }%
      \parshape 2 \leftmargin \linewidth \continueindent \dimexpr\linewidth-\continueindent\relax
   \setcounter{ALG@nested}{0}%
   \ALG@beginalgorithmic%
   }%
   {% end{algorithmic}
   % check if all blocks are closed
   \ALG@closeloops%
   \expandafter\ifnum\csname ALG@currentblock@\theALG@nested\endcsname=0\relax%
   \else%
      \PackageError{algorithmicx}{Some blocks are not closed!!!}{}%
   \fi%
   \ALG@endalgorithmic%
   \end{list}%
   }%
\title{Adaptive construction of surrogates for the Bayesian solution
  of inverse problems\thanks{The work was supported by the United
    States Department of Energy, Office of Advanced Scientific
    Computing Research (ASCR), under grant DE-SC0002517.}}
\author{Jinglai Li\footnotemark[2]
        \and Youssef M.\ Marzouk\footnotemark[3]}
\begin{document}

\maketitle

\renewcommand{\thefootnote}{\fnsymbol{footnote}}

\footnotetext[2]{Institute of Natural Sciences, Shanghai Jiaotong
  University, Shanghai 200240, China, \texttt{jinglaili@sjtu.edu.cn}.}

\footnotetext[3]{Department of Aeronautics and
Astronautics, Massachusetts Institute of Technology, Cambridge, MA
02139, USA, \texttt{ymarz@mit.edu}.}

\renewcommand{\thefootnote}{\arabic{footnote}}

\begin{abstract}
  The Bayesian approach to inverse problems typically relies on
  posterior sampling approaches, such as Markov chain Monte Carlo, for
  which the generation of each sample requires one or more evaluations
  of the parameter-to-observable map or forward model. When these
  evaluations are computationally intensive, approximations of the
  forward model are essential to accelerating sample-based inference.
  Yet the construction of globally accurate approximations for
  nonlinear forward models can be computationally prohibitive and in
  fact unnecessary, as the posterior distribution typically
  concentrates on a small fraction of the support of the prior
  distribution.
  % 
  % We present a new iterative and data-driven approach that uses
  % stochastic optimization---in particular, the cross-entropy
  % method---to construct surrogates that are accurate over the
  % support of the posterior distribution.
  % % ALTERNATIVE:
  We present a new approach that uses stochastic optimization to
  construct polynomial approximations over a sequence of measures
  adaptively determined from the data, eventually concentrating on
  the posterior distribution.
  The approach yields substantial gains in efficiency and accuracy
  over prior-based surrogates,
  % (2)
  % over surrogates constructed without accounting for the data,
  % (3)
  % over surrogates constructed with respect to the prior distribution,
  as demonstrated via application to inverse problems in partial
  differential equations.

\end{abstract}

\begin{keywords}
{Bayesian inference},
{cross-entropy method},
{importance sampling},
{inverse problem},
{Kullback-Leibler divergence},
{Markov chain Monte Carlo},
{polynomial chaos}
\end{keywords}
\pagestyle{myheadings}
\thispagestyle{plain}
\markboth{\MakeUppercase{J.\ Li and Y.\ M.\ Marzouk}}{\MakeUppercase{Adaptive construction of surrogates for inverse problems}}

\section{Introduction}\label{s:intro}
In many science and engineering problems, parameters of interest cannot be observed directly; instead, they must be estimated from indirect observations. In these situations, one can usually appeal to a \textit{forward model} mapping the parameters of interest to some quantities that can be measured. The corresponding \textit{inverse problem} then involves inferring the unknown parameters from a set of observations~\cite{Evans:2002}.

Inverse problems arise in a host of applications, ranging from the geosciences \cite{tarantola2005inverse, Mosegaard:2002, Malinverno:2002, cui2011bayesian} to chemical kinetics \cite{Najm:2008a} and far beyond. In these applications, data are inevitably noisy and often limited in number.  The Bayesian approach to inverse problems~\cite{kaipio2005statistical,stuart2010inverse,tarantola2005inverse,wang2005hierarchical} provides a foundation for inference from noisy and incomplete data, a natural mechanism for incorporating physical constraints and heterogeneous sources of information, and a quantitative assessment of uncertainty in the inverse solution. Indeed, the Bayesian approach casts the inverse solution as a posterior probability distribution over the model parameters or inputs. Though conceptually straightforward, this setting presents challenges in practice. The posterior distributions are typically not of analytical form or from a standard parametric family;
%and instead must be characterized through 
characterizing them exactly requires sampling approaches such as Markov chain Monte Carlo~\cite{besag1995bayesian,chen2000monte,tierney1994markov,MCMChandbook}. These methods entail repeated solutions of the forward model. When the forward model is computationally intensive, e.g., specified by partial differential equations (PDEs), a direct approach to Bayesian inference becomes computationally prohibitive.

One way to accelerate inference is to construct computationally inexpensive approximations of the forward model and to use these approximations as surrogates in the sampling procedure. Many approximation methods have been successfully employed in this context, ranging from projection-based model reduction~\cite{Wang200515,Nguyen_SantaFE08,Galbally09} to Gaussian process regression~\cite{williams2006combining,RSSB:RSSB294} to parametric regression~\cite{balakrishnan2003uncertainty}. Our previous work \cite{Marzouk2009,Marzouk2007,Marzouk2009a} has developed surrogate models by using stochastic spectral methods \cite{ghanem2003stochastic,xiu2010numerical} to propagate prior uncertainty from the inversion parameters to the forward model outputs. The result is a forward model approximation that converges in the \textit{prior-weighted} $L^2$ sense. Theoretical analysis shows that if the forward model approximation converges at certain rate in this prior-weighted norm, then (under certain assumptions) the posterior distribution generated by the approximation converges to the true posterior at the same rate~\cite{stuart2010inverse,Marzouk2009a,lucor}. Constructing a sufficiently accurate surrogate model over the support of the prior distribution, however, may not be possible in many practical problems. When the dimension of the parameters is large and the forward model is highly nonlinear, constructing such a ``globally accurate'' surrogate can in fact be a formidable task.

The inverse problem fortunately has more structure than the prior-based uncertainty propagation problem. Since the posterior distribution reflects some information gain relative to the prior distribution, it often concentrates on a much smaller portion of the parameter space. In this paper, we will propose that:\ (i) it can therefore be more efficient to construct a surrogate that only maintains high accuracy in the regions of appreciable posterior measure; and (ii) this ``localized'' surrogate can enable accurate posterior sampling and accurate computation of posterior expectations.

A natural question to ask, then, is how to build a surrogate in the important region of the posterior distribution before actually characterizing the posterior? Inspired by the cross-entropy (CE) method~\cite{de2005tutorial,rubinstein2004cross} for rare event simulation, % and by variational Bayesian methods~\cite{jaakola},
we propose an adaptive algorithm to find a distribution that is ``close'' to the posterior in the sense of Kullback-Leibler divergence and to build a local surrogate with respect to this approximating distribution. Candidate distributions are chosen from a simple parameterized family, and the algorithm minimizes the Kullback-Leibler divergence of the candidate distribution from the posterior using a sequence of intermediate steps, where the optimization in each step is accelerated through the use of locally-constructed surrogates. We demonstrate with numerical examples that the total computational cost of our method is much lower than the cost of building a globally accurate surrogate of comparable or even lesser accuracy. Moreover, we show that the final approximating distribution can provide an excellent proposal for Markov chain Monte Carlo sampling, in some cases exceeding the performance of adaptive random-walk samplers. This aspect of our methodology has links to previous work in adaptive independence samplers~\cite{keith2008adaptive}.

The remainder of this article is organized as follows. In Section~\ref{s:review} we briefly review the Bayesian formulation of inverse problems and previous work using polynomial chaos surrogates to accelerate inference. In Section~\ref{s:method}, we present our new adaptive method for the construction of local surrogates, with a detailed discussion of the algorithm and an analysis of its convergence properties. Section~\ref{s:example} provides several numerical demonstrations, and Section~\ref{s:conclusion} concludes with further discussion and suggestions for future work.

%In fact, from a MCMC perspective, the sampling efficiency can be significantly improved if 
% most samples are concentrated in the appreciable or important region of the posterior in each step.
%For instance, this can be achieved by using an independent proposal distribution (see section~3 for details),
%one can employ the technique of independent MCMC to concentrate the samples in the appreciable or important region of the posterior distribution (cite).
%to achieve variance reduction of the sampling procedure~\cite{Cappe2008,Geweke1989}. 
%and in this case, 
%the PC surrogate only needs to be accurate in that important region.

\section{Bayesian inference and polynomial chaos
  surrogates}\label{s:review}

Let $\-y\in \Real^{n_y}$ be the vector of parameters of interest
and $\-d\in\Real^{n_d}$ be a vector of observed data. In the Bayesian
formulation, prior information about $\-y$ is encoded in the prior
probability density $\pi(\-y)$ and related to the posterior
probability density $\pi(\-y|\-d)$ through Bayes' rule:
\begin{equation}
\pi(\-y|\-d) = \frac{\pi(\-d|\-y)\pi(\-y)}{\int \pi(\-d|\-y)\pi(\-y) d\-y\,},
\label{e:bayesrule}
\end{equation}
where $\pi(\-d|\-y)$ is the likelihood function. (In what
follows, we will restrict our attention to finite-dimensional
parameters $\-y$ and assume that all random variables have densities
with respect to Lebesgue measure.)
The likelihood function incorporates both the data and the forward model. In the context of an inverse problem, the likelihood usually results from some combination of a deterministic forward model $\-G: \Real^{n_y} \rightarrow \Real^{n_d}$ and a statistical model for the measurement noise and model error. For example, assuming additive measurement noise $\bolde$, we have
\begin{equation}
\-d = \-G(\-y)+\bolde \, .
\end{equation}
If the probability density of $\bolde$ is given by $\pi_\epsilon(\bolde)$, then the likelihood function becomes
\begin{equation}
\pi(\-d|\-y) = \pi_\epsilon(\-d-\-G(\-y))\ .\label{e:lh}
\end{equation}
For conciseness, we define $\pi^*(\-y) := \pi(\-y|\-d)$ and
$L(\-G) := \pi_\epsilon(\-d - \-G)$ and rewrite Bayes' rule \eqref{e:bayesrule} as
\begin{equation}
\pi^*(\-y) = \frac{L(\-G(\-y))\pi(\-y)}{I},
\label{e:bayesrule2}
\end{equation}
where 
\begin{equation}
I:=\int L(\-G(\-y))\pi(\-y) d\-y\,\label{e:I}
\end{equation}
is the posterior normalizing constant or evidence. In practice, no
closed form analytical expression for $\pi^*(\-y)$ exists, and any
posterior moments or expectations must be estimated via sampling
methods such as Markov chain Monte Carlo (MCMC), which require many
evaluations of the forward model.

If the forward model has a smooth dependence on its input parameters
$\-y$, then using stochastic spectral methods to accelerate this
computation is relatively straightforward. As described
in~\cite{Marzouk2009,Marzouk2007,Marzouk2009a,frangos:2010:srm}, the essential
idea behind existing methods is to construct a stochastic forward
problem whose solution approximates the deterministic forward model
$\-G(\-y)$ over the support of the prior $\pi(\-y)$. More precisely,
we seek a polynomial approximation $\widetilde{\-G}_N(\-y)$ that
converges to $\-G(\-y)$ in the prior-weighted $L^2$ norm. Informally,
this procedure ``propagates'' prior uncertainty through the forward
model and yields a computationally inexpensive surrogate that can
replace $\-G(\-y)$ in, e.g., MCMC simulations.

For simplicity, we assume prior independence of the input parameters, namely,
\begin{equation*}
\pi(\-y)= \prod_{j=1}^{n_y} \pi_j(y_j)\,.
\end{equation*}
(This assumption can be loosened when necessary; see, for example,
discussions in \cite{Babuska2010,soize:395}.) Since $\-G(\-y)$ is
multi-dimensional, we construct a polynomial chaos (PC) expansion for
each component of the model output. Suppose that $g(\-y)$ is a
component of $\-G(\-y)$; then its $N$-th order PC expansion is
\begin{eqnarray}
  g_N( \-y ) = 
  \sum_{|\-i|\leq N} a_{\-i}
  \Psi_{\-i}(\-y),
  \label{e:PCEForm2}
\end{eqnarray}
where $\-i:=\left(i_1,i_2,\ldots,i_{n_y}\right)$ is a
 multi-index with $|\-i| := i_1+i_2+\ldots+i_{n_y}$,
$a_{\-i}$ are the expansion
coefficients, and $\Psi_{\-i}$ are orthogonal polynomial basis functions,
defined as
\begin{eqnarray}
  \Psi_{\-i}(\-y) = \prod_{j=1}^{n_y}
  \psi_{i_j}(y_j) \, . %, \qquad 0\leq|\-i|\leq N.
\end{eqnarray}
Here $\psi_{i_j}(y_j)$ is the univariate polynomial of degree $i_j$, from a system satisfying orthogonality with respect to $\pi_j$: 
\begin{eqnarray}
  \E_{j}\left[\psi_{i}\psi_{i^\prime}\right] = \int
  \psi_i\left(y_j\right)\psi_{i^\prime}\left(y_j\right)\pi_j\left(y_j\right)\,dy_j =
  \delta_{i,i^\prime},
  \label{e:orthogonality}
\end{eqnarray}
where we assume that the polynomials have been properly normalized.
It follows that $\Psi_{\-i}(\-y)$ are $n_y$-variate orthonormal polynomials satisfying
\begin{eqnarray}
  \E\left[\Psi_{\-i}(\-y)\Psi_{\-i^\prime}(\-y)\right] = \int
  \Psi_{\-i}\left(\-y\right)\Psi_{\-i^\prime}\left(\-y\right)\pi(\-y)\,d\-y =
  \delta_{\-i,\-i^\prime}\,,
  \label{e:orth}
\end{eqnarray}
where $\delta_{\-i,\-i^\prime}=\prod_{j=1}^{n_y}\delta_{i_j,i^\prime_j}$.

Because of the orthogonality condition \eqref{e:orthogonality}, the
distribution over which we are constructing the polynomial
approximation---namely each prior distribution $\pi_j(y_j)$---determines
the polynomial type. For example, Hermite polynomials are associated
with the Gaussian distribution, Jacobi polynomials with the beta
distribution, and Laguerre polynomials with the gamma
distribution. For a detailed discussion of these correspondences and
their resulting computational efficiencies, see~\cite{Xiu2002a}. For
PC expansions corresponding to non-standard distributions, see
\cite{Wan2006,arnst2012measure}.
Note also that in the equations above, we have restricted our attention to
total-order polynomial expansions, i.e., $|\-i|\leq N$. This choice is
merely for simplicity of exposition; in practice, one may choose any
admissible multi-index set $\mathcal{J} \ni \-i$ to define the
polynomial chaos expansion in \eqref{e:PCEForm2}.

The key computational task in constructing these polynomial
approximations is the evaluation of the expansion coefficients
$\-a_{\-i}$. Broadly speaking, there are two classes of methods for
doing this: intrusive (e.g., stochastic Galerkin) and non-intrusive
(e.g., interpolation or pseudospectral approximation). In this paper,
we will follow \cite{Marzouk2009a} and use a non-intrusive method to
compute the coefficients.  The main advantage of a non-intrusive
approach is that it only requires a finite number of deterministic
simulations of the forward model, rather than a reformulation of the
underlying equations. Using the orthogonality
relation~\eqref{e:orth}, the expansion coefficients are given by
\begin{equation}
a_{\-i}= \E[g(\-y)\Psi_{\-i}(\-y)\,] ,
\end{equation}
and thus $a_{\-i}$ can be estimated by numerical quadrature
\begin{equation}
\tilde{a}_{\-i} = \sum_{m=1}^M g(\-y^{(m)})\,\Psi_{\-i}(\-y^{(m)})\,w^{(m)},
\label{e:quad}
\end{equation}
where $\-y^{(m)}$ are a set of quadrature nodes and $w^{(m)}$ are the
associated weights for $m=1,\ldots,M$. Tensor product quadrature rules
are a natural choice, but for $n_y \geq 2$, using sparse quadrature
rules to select the model evaluation points can be vastly more efficient.
Care must be taken to avoid significant aliasing errors when using
sparse quadrature directly in \eqref{e:quad}, however. Indeed, it is
advantageous to recast the approximation as a Smolyak sum of
constituent full-tensor polynomial approximations, each associated
with a tensor-product quadrature rule that is appropriate to its
polynomials~\cite{constantine:2012:spa}. This type of
approximation may be constructed adaptively, thus taking advantage of
weak coupling and anisotropy in the dependence of $\-G$ on $\-y$. More
details can be found in~\cite{conrad:2012:asp}.

% In MCMC simulations, one simply replaces the true forward model by its
% PC surrogate in evaluating the likelihood functions in each
% iteration~\cite{Marzouk2009,Marzouk2007}.

\section{Adaptive surrogate construction}\label{s:method}

The polynomial chaos surrogates described in the previous section are
constructed to ensure accuracy with respect to the prior; that is,
they converge to the true forward model $\-G(\-y)$ in the $L^2_\pi$
sense, where $\pi$ is the prior density on $\-y$. In many inference
problems, however, the posterior is concentrated in a very small
portion of the entire prior support. In this situation, it can be much
more efficient to build surrogates only over the important region of
the posterior. (Consider, for example, a forward model output that
varies nonlinearly with $\-y$ over the support of the prior. Focusing
onto a much smaller range of $\-y$ reduces the degree of nonlinearity;
in the extreme case, if the posterior is sufficiently concentrated and
the model output is continuous, then even a linear surrogate could be
sufficient.) In this section we present a general method for
constructing posterior-focused surrogates.

\subsection{Minimizing cross entropy}

%While the direct use of the posterior is not possible, one can use a surrogate distribution that is ``close'' to the true posterior in terms of some measure of distance between probability distributions.
%The  Kullback-Leibler (K-L) divergence, also known as the cross-entropy~(CE), between two distributions, 
%provides a natural measure of distance for this purpose. 

The main idea of our method is to build a polynomial chaos surrogate
over a probability distribution that is ``close'' to the posterior in
the sense of Kullback-Leibler (K-L) divergence. Specifically, we seek
a distribution with density $p(\-y)$ that minimizes the K-L divergence
from $\pi^\ast(\-y)$ to $p$:\footnote{Note that the K-L divergence is
  not symmetric; $\dis_\mathrm{KL}(\pi^\ast \| \, p) \neq
  \dis_\mathrm{KL}( p \| \, \pi^\ast)$. Minimizing the K-L divergence
  from $\pi^\ast$ to $p$ as in \eqref{e:ce} tends to yield a $p$
  that is broader than $\pi^\ast$, while minimizing K-L divergence in
  the opposite direction can lead to a more compact approximating
  distribution, for instance, one that concentrates on a single mode
  of $\pi^\ast$ \cite{mackay:2006:eoi, turner2008two}. The former behavior is far more
  desirable in the present context, as we seek a surrogate that
  encompasses the entire posterior.}
\begin{equation}
\dis_\mathrm{KL}(\pi^\ast \| \, p) = \int \pi^\ast(\-y)\ln \frac{\pi^\ast(\-y)}{p(\-y)}\,d\-y
=\int \pi^\ast(\-y)\ln{\pi^\ast(\-y)} \, d\-y - \int \pi^\ast(\-y)\ln{p(\-y)}\,d\-y\,.
\label{e:ce}
\end{equation}
Interestingly, one can minimize \eqref{e:ce} without exact knowledge
of the posterior distribution $\pi^\ast(\-y)$.  Since the first
integral on the right-hand side of \eqref{e:ce} is independent of $p$,
minimizing $\dis_\mathrm{KL}(\pi^\ast \| \, p)$ is equivalent to maximizing
\begin{equation*}
\int \pi^\ast(\-y)\ln{p(\-y)}\,d\-y\,=
\int {\frac{L(\-G(\-y))\pi(\-y)}I\ln p(\-y) \,d\-y} \, .
\end{equation*}
Moreover, since $I$ is a constant (for fixed data), one simply needs
to maximize $\int {L(\-G(\-y))\ln p(\-y)}\pi(\-y)\,d\-y$.

In practice, one selects the candidate distributions $p$ from a
parameterized family $\mathcal{P}_V = \{p(\-y;\-v)\}_{\-v\in V}$, where $\-v$ is a
vector of parameters (called ``reference parameters'' in the cross-entropy method for
rare event simulation) and $V$ is the corresponding parameter space.
Thus the desired distribution can be found by solving the optimization problem
\begin{equation}
\max_{\-v\in V} D(\-v) = \max_{\-v\in V}\int {L(\-G(\-y))\ln
  p(\-y;\-v) \pi(\-y)}\,d\-y\,.
\label{e:maxD}
\end{equation}

\subsection{Adaptive algorithm}
%The solution to Eq.~\eqref{e:maxD}, i.e., $\-v^\ast=\arg\max_{\-v\in V} D$, can be estimated by
% maximizing the MC estimator of $D$:
%\begin{equation}
% \^D 
%= \frac1M\sum_{m=1}^M{L(\-G(\-y^{(m)}))\ln(p(\-y^{(m)};\-v))}\,,
%\label{e:maxDmc}
%\end{equation}
%where samples $\{\-y^{(m)}\}$ are drawn according to $\pi(\-y)$. 

In this section, we propose an adaptive algorithm to solve the optimization
problem above. The algorithm has three main ingredients: sequential
importance sampling, a tempering procedure, and localized surrogate
models. In particular, we construct a sequence of intermediate
optimization problems that converge to the original one in
\eqref{e:maxD}, guided by a tempering parameter. In each intermediate
problem, we evaluate the objective using importance sampling and we
build a local surrogate to replace expensive evaluations of the
likelihood function.

We begin by recalling the essentials of importance sampling
\cite{robert:2004:mcs}. Importance sampling (IS) simulates from a biasing
distribution that may be different than the original distribution or
the true distribution of interest, but corrects for this mismatch by
weighing the samples with an appropriate ratio of densities. By
focusing samples on regions where an integrand is large, for example,
IS can reduce the variance of a Monte Carlo estimator of an integral
\cite{LiuMC:2001}. In the context of \eqref{e:maxD}, a na\"{i}ve Monte Carlo
estimator might sample from the prior $\pi$, but this estimator will
typically have extremely high variance: when the likelihood function
is large only on a small fraction of the prior samples, most of the
terms in the resulting Monte Carlo sum will be near zero. Instead, we
sample from a biasing distribution $q(\-y)$ and thus rewrite $D$ as:
\begin{subequations}
\begin{equation}
D(\-v) = \int {L(\-G(\-y))\ln p(\-y;\-v) \frac{\pi(\-y)}{q(\-y)}\,q(\-y)}\,d\-y\,,
\label{e:Dis}
\end{equation}
and obtain an unbiased IS estimator of $D$:
\begin{equation}
\hat{D}(\-v)
:=  \frac1M\sum_{m=1}^M{L \left ( \-G(\-y^{(m)}) \right) \, l(\-y^{(m)}}) \, \ln p(\-y^{(m)};\-v),
\label{e:Dmcis}
\end{equation}
\end{subequations}
where $l(\-y) := \pi(\-y)/q(\-y)$ is the density ratio or
\textit{weight} function and the samples $\{\-y^{(m)}\}$ in
\eqref{e:Dmcis} are drawn independently from $q(\-y)$.

Next we introduce a tempering parameter $\lambda$, putting
\begin{equation*}
L(\-y;\lambda) := [L(\-y)]^{\frac1\lambda},
\end{equation*}
and defining $\pi^\ast(\-y;\lambda)$ as the posterior density
associated with the tempered likelihood $L(\-y;\lambda)$.  Here
$L(\-y;\lambda)$ and $\pi^\ast(\-y;\lambda)$ revert to the original
likelihood function and posterior density, respectively, for
$\lambda=1$. The algorithm detailed below will ensure that
$\lambda=1$ is reached within a finite number of steps.

The essential idea of the algorithm is to construct a sequence of
biasing distributions $\left ( p(\-y;\-v_k) \right )$, where each $p$
is drawn from the parameterized family $\mathcal{P}$. Each biasing
distribution has two roles: first, to promote variance reduction via
importance sampling; and second, to serve as the input distribution
for constructing a local surrogate model $\widetilde{\-G}_k(\-y)$. The
final biasing distribution, found by solving the optimization problem
when $\lambda=1$, is the sought-after ``best approximation'' to the
posterior described above.

% Since IS is embedded in the sequential procedure, we replace the
% notation $q(\-y)$ for a generic biasing distribution with the notation
% $p(\-y, \-v_k)$. 

%----------------------------------------------------------%
\begin{algorithm}
\caption{Adaptive algorithm}
\label{alg:adap}
\begin{algorithmic}[1]
\State \textbf{Input:} probability fraction $\rho \in (0,1)$, likelihood
function level $\gamma > 0$, minimum step size $\delta > 0$; initial biasing
distribution parameters $\mathbf{v}_0$, IS sample size $M$
\State \textbf{Initialize:} $k=0$, $\lambda_0 = \infty$
\While {$\lambda_k > 1$}

\State Construct $\widetilde{\mathbf{G}}_k(\mathbf{y})$, the polynomial chaos
approximation of $\mathbf{G}_k(\mathbf{y})$ with respect to $p(\mathbf{y};\mathbf{v}_k)$
\vspace*{2pt}
\State Draw $M$ samples $\{\mathbf{y}^{(1)}$, \ldots, $\mathbf{y}^{(M)}\}$ according to $p(\mathbf{y};\mathbf{v}_k)$.
\vspace*{2pt}
\State Compute $\lambda_{k+1}$ such that the largest $100\rho\%$
percent of the resulting likelihood function values
$\{L ( \widetilde{\mathbf{G}}_k(\mathbf{y}^{(m)}; \lambda_{k+1}) )\}_{m=1}^M$ are larger than
$\gamma$.
\vspace*{2pt}
\If {$\lambda_{k+1} > \lambda_k-\delta$}
\State $\lambda_{k+1} \gets \lambda_{k} - \delta$
\EndIf
\If {$\lambda_{k+1} < 1$}
\State $\lambda_{k+1} \gets 1$
\EndIf
\vspace*{2pt}
\State Solve the optimization problem: \label{line:opt}
\begin{equation*}
\mathbf{v}_{k+1} = \arg \max_{\mathbf{v}\in V} \hat{D}_{k+1}(\mathbf{v}) = \arg \max_{\mathbf{\mathbf{v}}\in V} \frac1M \sum^M_{n=1}
{L(\widetilde{\mathbf{G}}_k(\mathbf{y}^{(m)});\lambda_{k+1}) \, \ln p(\mathbf{y}^{(m)};\mathbf{v})
  \, l_{k}(\mathbf{y}^{(m)})}
\end{equation*}
where $l_{k}(\mathbf{y}) := \pi(\mathbf{y})/p(\mathbf{y};\mathbf{v}_k)$. 
\vspace*{2pt}
\State $k  \gets  k + 1$
\EndWhile
\end{algorithmic}
\end{algorithm}
%----------------------------------------------------------%

Steps of the algorithm are detailed in Algorithm~\ref{alg:adap}. The
basic idea behind the iterations is the following: since
$p(\-y;\-v_k)$, the biasing distribution obtained at step $k$, is
close to $\pi^\ast(\-y;\lambda_k)$, then in the next step, if one can
choose $\lambda_{k+1}$ such that $\pi^\ast(\-y;\-v_{k})$ and
$\pi^\ast(\-y;\-v_{k+1})$ are close, the forward model surrogate and
the IS estimator based on $p(\-y;\-v_k)$ should be effective for the
optimization problem at step $k+1$. A more formal discussion of the
convergence properties of the algorithm is given in
Section~\ref{s:convergence}.
%%
% Could imagine a version without tempering, but variance of Dhat
% would be rather high. 
Note that for reasons of computational efficiency, we want the
distributions over which we construct the surrogates to remain
relatively localized. Thus a good choice for $\-v_0$ would keep the
variance of $p(\-y; \-v_0)$ relatively small and perhaps center it at
the prior mean. The value of $\lambda$ will automatically adjust to
the choice of initial biasing distribution.

An important part of the algorithm is the choice of a parameterized
family of distributions $\mathcal{P}_V$. The distributions should be
flexible yet easy to sample, and straightforward to use as an input
distribution for the construction of polynomial surrogates. To this
end, multivariate normal distributions are a convenient choice. Not
only do they suggest the use of the well-studied Gauss-Hermite
polynomial chaos expansion for the forward model, but they also make
it possible to solve the optimization problem in step~\ref{line:opt}
of Algorithm~\ref{alg:adap} analytically. For example, let
$p(\-y;\-v)$ be an uncorrelated multivariate Gaussian:
\begin{equation}
p(\-y;\-v)= \prod_{j=1}^{n_y} p_j(y_j)\, , \qquad p_j(y_j) = \frac1{\sqrt{2\pi}\sigma_j} \exp\left(-\frac{(y_j-\mu_j)^2}{2\sigma_j^2}\right)\,,
\label{e:normal}
\end{equation}
where the reference parameters are now
$\-v=(\mu_1,\ldots,\mu_{n_y},\sigma_1,\ldots,\sigma_{n_y})$.  Assuming
that $\hat{D}$ in step~\ref{line:opt} of Algorithm~\ref{alg:adap} is
convex and differentiable with respect to $\-v$, we obtain the
solution to $\max_{\-v} \hat{D}$ by solving
\begin{equation}
\nabla_{\-v} \hat{D} = 0 \, .\label{e:gradD}
\end{equation}
Substituting \eqref{e:normal} into \eqref{e:gradD} yields:
\begin{subequations}
\begin{gather}
\partialderiv{\hat{D}}{\mu_j}=\frac1M\sum^M_{m=1}
{L(\widetilde{\-G}_k(\-y^{(m)})) \, l_k(\-y^{(m)}) \, (2y^{(m)}_j-2\mu_j)}=0\,,\\
\partialderiv{\hat{D}}{\sigma_j}= \frac1M\sum^M_{m=1}
L(\widetilde{\-G}_k(\-y^{(m)}))\, l_k(\-y^{(m)})
\left(\frac{(y^{(m)}_j-\mu_j)^2}{\sigma^3_j}-\frac1{\sigma_j}\right)=0\,,
\end{gather}
\end{subequations}
for $j=1 \ldots n_y$, the solution of which can readily be found as:
\begin{subequations}
\begin{gather}
\mu_j=\frac{\sum^M_{m=1} L(\widetilde{\-G}_k(\-y^{(m)}))l_k(\-y^{(m)})y^{(m)}_j}
{\sum^M_{m=1}L(\widetilde{\-G}_k(\-y^{(m)}))l_k(\-y^{(m)})}\,,\\
{\sigma_j}= 
\sqrt{\frac{\sum^M_{m=1} L(\widetilde{\-G}_k(\-y^{(m)}))l_k(\-y^{(m)})(y^{(m)}_j-\mu_j)^2}
{\sum^M_{m=1}L(\widetilde{\-G}_k(\-y^{(m)}))l_k(\-y^{(m)})}}\,.
\end{gather}
\end{subequations}
This is just an example, of course. We emphasize that our method does
not require any specific type of biasing distribution, and that one
can freely choose the family $\mathcal{P}_V$ that is believed to
contain the best approximations of the posterior distribution.

\subsection{Convergence analysis}
\label{s:convergence}
By design, the tempering parameter $\lambda$ in
Algorithm~\ref{alg:adap} reaches 1 in a finite number of steps, and as
a result the algorithm converges to the original optimization problem.
Thus we only need to analyze the convergence of each step. Without
causing any ambiguity, we will drop the step index $k$ throughout this
subsection.

First, we set up some notation. Let $\widetilde{\-G}_N(\-y)$ be the
$N$-th order polynomial chaos approximation of $\-G(\-y)$, based on a
biasing distribution $q(\-y)$. Also, let
\begin{equation}
D_N(\-v) := \int {L(\widetilde{\-G}_N(\-y)) \ln p(\-y;\-v) \, l(\-y)
  \, q(\-y)} \, d\-y\,,\\
\end{equation}
and
\begin{equation}
\hat{D}_N(\-v)  := \frac1M\sum_{m=1}^M{L \left ( \widetilde{\-G}_N 
    (\-y^{(m)}) \right )
  \ln p(\-y^{(m)};\-v) \, l(\-y^{(m)}}) \, ,
\label{e:D2is}
\end{equation}
where $l(\-y) := \pi(\-y)/q(\-y)$, and $\-y^{(m)}$ are sampled
independently from $q(\-y)$. Note that, for any $\-v$, $\hat{D}_N$ is
a random variable but $D_N$ is a deterministic quantity. We make the
following assumptions:

\begin{assumption}\label{a:1}
\begin{enumerate}
\item[(a)] The biasing distribution $q(\-y)$ satisfies
\begin{equation}
 \|\ln(p(\-y;\-v))l(\-y)\|_{L^2_q}<\infty,
\label{e:fm}
\end{equation}
 where $\|\cdot\|_{L^2_q}$ is the $L^2$-norm with weight $q(\-y)$;
\item[(b)] The likelihood function $L(\-g)$ is bounded and 
  uniformly continuous (with respect to $\-g$) on $\{ \-g=\-G(\-y):
  q(\-y)>0\}\cup\{ \-g=\widetilde{\-G}_N(\-y): q(\-y)>0, \ N \geq 1  \}$.
\end{enumerate}
\end{assumption}

Now we give a lemma that will be used to prove our convergence results:
\begin{lemma}
Suppose that Assumption~\ref{a:1}(a) holds.
If  \[\lim_{N\rightarrow\infty}\|\widetilde{\-G}_N(\-y)-\-G(\-y)\|_{L^2_q}=0,\] 
then 
\[
\lim_{N\rightarrow\infty}\|L(\widetilde{\-G}_N(\-y))- L(\-G(\-y))\|_{L^2_q}=0.
\]
\label{lm:cmt}
\end{lemma}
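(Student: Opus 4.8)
The plan is to read Lemma~\ref{lm:cmt} as a standard continuity-of-composition statement in $L^2_q$: convergence of $\widetilde{\-G}_N$ to $\-G$ in $L^2_q$ forces convergence in $q$-measure, and composing with a bounded, uniformly continuous map preserves this, while the uniform bound on $L$ supplies the domination needed to come back to $L^2_q$ convergence. The ingredients I would use are exactly the boundedness and uniform continuity of $L$ furnished by Assumption~\ref{a:1}(b): write $C := \sup|L| < \infty$ on the relevant set, namely the union of the ranges of $\-G$ and of all the $\widetilde{\-G}_N$ over $\{q>0\}$, on which $L$ is also uniformly continuous (norms on $\Real^{n_d}$ taken to be Euclidean).

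The computation itself is short. Fix $\epsilon > 0$ and, by uniform continuity, choose $\delta > 0$ so that $|L(\-g) - L(\-g')| < \epsilon$ whenever $|\-g - \-g'| < \delta$ for admissible $\-g,\-g'$. Split $\{q > 0\}$ into $A_N := \{\-y : |\widetilde{\-G}_N(\-y) - \-G(\-y)| < \delta\}$ and its complement. On $A_N$ the integrand $|L(\widetilde{\-G}_N(\-y)) - L(\-G(\-y))|^2$ is at most $\epsilon^2$; on $A_N^c$ it is at most $(2C)^2$; and a Markov/Chebyshev estimate gives $q(A_N^c) \le \delta^{-2}\int |\widetilde{\-G}_N - \-G|^2 q\,d\-y = \delta^{-2}\|\widetilde{\-G}_N - \-G\|_{L^2_q}^2$, which tends to $0$ by hypothesis. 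Therefore
\[
\|L(\widetilde{\-G}_N) - L(\-G)\|_{L^2_q}^2 \;\le\; \epsilon^2\, q(A_N) + 4C^2\, q(A_N^c) \;\le\; \epsilon^2 + \frac{4C^2}{\delta^2}\,\|\widetilde{\-G}_N - \-G\|_{L^2_q}^2 ,
\]
using that $q$ is a probability density. Letting $N \to \infty$ gives $\limsup_N \|L(\widetilde{\-G}_N) - L(\-G)\|_{L^2_q}^2 \le \epsilon^2$, and since $\epsilon > 0$ is arbitrary the left-hand side is $0$, which is the claim.

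I do not expect a real obstacle here; the argument is elementary. The single point that genuinely needs care is that the polynomial approximant $\widetilde{\-G}_N$ need not take values in the range of the true model $\-G$, so the continuity of $L$ must be controlled on a set that also contains all the $\widetilde{\-G}_N(\-y)$ with $q(\-y)>0$ — which is precisely why Assumption~\ref{a:1}(b) is stated on that union, and I would invoke it explicitly rather than continuity of $L$ on the image of $\-G$ alone. As an alternative to the modulus-of-continuity argument, one can pass to a subsequence along which $\widetilde{\-G}_N \to \-G$ $q$-a.e., apply continuity of $L$ pointwise, invoke dominated convergence with constant dominating function $(2C)^2$, and then promote the subsequential conclusion to the full sequence via the usual "every subsequence has a further subsequence tending to $0$" device; this is slightly longer but avoids the explicit splitting.
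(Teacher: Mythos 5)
Your proof is correct and follows essentially the same route as the paper's: uniform continuity of $L$ gives a $\delta$-threshold, the domain is split according to whether $|\widetilde{\-G}_N-\-G|<\delta$, the good set contributes at most $\epsilon^2$, and the bad set has small $q$-measure (you via an explicit Chebyshev bound, the paper via ``$L^2_q$ convergence implies convergence in probability'') with the integrand there controlled by the boundedness of $L$. The only differences are cosmetic: your quantitative $4C^2\delta^{-2}\|\widetilde{\-G}_N-\-G\|_{L^2_q}^2$ estimate versus the paper's normalization $0<L<1$, and your (correct) observation that the uniform continuity must hold on a set containing the ranges of all the $\widetilde{\-G}_N$, which is exactly how Assumption~\ref{a:1}(b) is phrased.
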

\begin{proof}
See Appendix.
\end{proof}\\

Our main convergence result is formalized in the proposition below.
\begin{proposition}\label{prop:1}
Suppose that Assumption \ref{a:1} holds. Then we have
 \begin{equation}
 \lim_{M,\,N\rightarrow\infty}\| \hat{D}_N(\-v) - D(\-v)\|_{L^2_q}=0.
\label{e:conv}
 \end{equation}
\end{proposition}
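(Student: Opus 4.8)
The plan is to split the error into a deterministic ``bias'' contribution coming from replacing $\-G$ by its surrogate $\widetilde{\-G}_N$, and a stochastic ``Monte Carlo'' contribution coming from the finite sample average, and to bound these so that the two limits decouple. Concretely, for fixed $\-v$ write
\[
\hat{D}_N(\-v) - D(\-v) = \bigl(\hat{D}_N(\-v) - D_N(\-v)\bigr) + \bigl(D_N(\-v) - D(\-v)\bigr),
\]
and apply the triangle inequality to the norm on the left-hand side of \eqref{e:conv}, which for fixed $\-v$ is a mean-square norm over the independent draws $\-y^{(m)}\sim q$ (whereas $D_N(\-v)$ and $D(\-v)$ are deterministic). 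The two summands will be controlled by $N\to\infty$ and $M\to\infty$ respectively, with the $M$-estimate uniform in $N$.

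First I would bound the deterministic term. Pulling out the factor $L(\widetilde{\-G}_N(\-y)) - L(\-G(\-y))$ and applying Cauchy--Schwarz with weight $q$,
\[
\bigl| D_N(\-v) - D(\-v) \bigr| = \left| \int \bigl(L(\widetilde{\-G}_N(\-y)) - L(\-G(\-y))\bigr)\,\ln p(\-y;\-v)\, l(\-y)\, q(\-y)\, d\-y \right| \le \bigl\| L(\widetilde{\-G}_N) - L(\-G) \bigr\|_{L^2_q}\,\bigl\| \ln p(\cdot\,;\-v)\, l \bigr\|_{L^2_q}.
\]
The second factor is finite by Assumption~\ref{a:1}(a). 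Since the polynomial chaos expansion converges in $L^2_q$, i.e.\ $\|\widetilde{\-G}_N - \-G\|_{L^2_q}\to 0$, Lemma~\ref{lm:cmt} gives $\|L(\widetilde{\-G}_N) - L(\-G)\|_{L^2_q}\to 0$, so this term vanishes as $N\to\infty$, for every $M$.

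Next I would bound the stochastic term. Since $\-y^{(m)}\sim q$, $\hat{D}_N(\-v)$ is an unbiased estimator of $D_N(\-v)$, and being an average of $M$ i.i.d.\ terms, its mean-square error is
\[
\E\bigl[\,|\hat{D}_N(\-v) - D_N(\-v)|^2\,\bigr] = \frac1M\,\Var_q\!\bigl( L(\widetilde{\-G}_N(\-y))\,\ln p(\-y;\-v)\, l(\-y) \bigr) \le \frac1M\,\E_q\!\bigl[ L(\widetilde{\-G}_N(\-y))^2\,(\ln p(\-y;\-v))^2\, l(\-y)^2 \bigr].
\]
By Assumption~\ref{a:1}(b), $L$ is bounded on the relevant set, say $|L|\le C$, so the right-hand side is at most $C^2\,\|\ln p(\cdot\,;\-v)\, l\|_{L^2_q}^2 / M$, which is finite by Assumption~\ref{a:1}(a) and tends to $0$ as $M\to\infty$ with a bound uniform in $N$. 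Combining the two estimates,
\[
\bigl\| \hat{D}_N(\-v) - D(\-v) \bigr\|_{L^2_q} \le \frac{C\,\|\ln p(\cdot\,;\-v)\, l\|_{L^2_q}}{\sqrt{M}} + \bigl\| L(\widetilde{\-G}_N) - L(\-G) \bigr\|_{L^2_q}\,\bigl\| \ln p(\cdot\,;\-v)\, l \bigr\|_{L^2_q},
\]
and since the first summand $\to 0$ as $M\to\infty$ (uniformly in $N$) and the second $\to 0$ as $N\to\infty$ (uniformly in $M$), the joint limit \eqref{e:conv} follows.

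The only genuine obstacle is making sure the two limits decouple, which is exactly why the Monte Carlo variance must be bounded \emph{uniformly in $N$}; this is where the boundedness half of Assumption~\ref{a:1}(b) is used (the uniform-continuity half enters only inside Lemma~\ref{lm:cmt}, via the argument establishing $L^2_q$-convergence of $L\circ\widetilde{\-G}_N$). The other point worth flagging is that the argument tacitly invokes $L^2_q$-convergence of the surrogate, $\|\widetilde{\-G}_N - \-G\|_{L^2_q}\to 0$ --- the hypothesis of Lemma~\ref{lm:cmt} --- which holds under the standard conditions guaranteeing density of polynomials in $L^2_q$; if desired, it can be stated explicitly as an additional hypothesis of the proposition.
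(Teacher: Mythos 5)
Your proof is correct and follows essentially the same route as the paper's: the same decomposition into the Monte Carlo error $\hat{D}_N - D_N$ (bounded by the variance of the i.i.d.\ average, controlled via the boundedness of $L$ and Assumption~\ref{a:1}(a)) and the truncation error $D_N - D$ (bounded via Cauchy--Schwarz/H\"{o}lder together with Lemma~\ref{lm:cmt}). Your explicit remarks that the variance bound is uniform in $N$ and that $\|\widetilde{\-G}_N-\-G\|_{L^2_q}\to 0$ is a tacit hypothesis are sound refinements of the paper's argument rather than a different approach.
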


\begin{proof}
The variance of  estimator~\eqref{e:D2is} is
\begin{eqnarray}
\| \hat{D}_N(\-v) - D_N(\-v)\|^2_{L^2_q}
% &=&\frac1M
% \Var_q[L(\widetilde{\-G}_N(\-y))\ln(p(\-y;\-v))l(\-y)]\notag\\
&=&\frac1M \left ( \|L(\widetilde{\-G}_N(\-y)) \ln p(\-y;\-v) \,
  l(\-y)\|^2_{L^2_q}-D_N^2 \right )\notag\\
&\leq&\frac1M \left ( C \| \ln p(\-y;\-v) \, l(\-y)\|^2_{L^2_q}-D_N^2
\right ),
\end{eqnarray}
where $C > 0$ is some constant, and it
follows immediately that
 \begin{equation}
 \lim_{M\rightarrow\infty}\| \hat{D}_N(\-v) - D_N(\-v)\|_{L^2_q}=0.
\label{e:conv1}
 \end{equation}
We then look at the truncation error between $D_N(\-v)$ and $D(\-v)$:
\begin{eqnarray}
|D_N(\-v)-D(\-v)| &=&
\left | \int
  {(L(\widetilde{\-G}_N(\-y))-L(\-G(\-y)))\ln p(\-y;\-v) \, l(\-y)q(\-y)}\,d\-y
  \right |\notag\\
&\leq& \left
  \|(L(\widetilde{\-G}_N(\-y))-L(\-G(\-y)))\ln p(\-y;\-v) \, l(\-y)q(\-y)
  \right \|_{L^1_q}\notag\\
%
%\leq\int {|[L(\widetilde{\-G}_N(\-y))-L(\-G(\-y))]\ln(p(\-y;\-v))l(\-y)|q(\-y)}\,d\-y\\
%=\int {|[1-\frac{L(\-G(\-y))}{L(\widetilde{\-G}_N(\-y))}]L(\widetilde{\-G}_N(\-y))\ln(p(\-y;\-v))l(\-y)|q(\-y)}\,d\-y\\
%\leq\int {|(G_N(\-y)-G(\-y))L(\widetilde{\-G}_N(\-y))\ln(p(\-y;\-v))l(\-y)|q(\-y)}\,d\-y\\
%\leq \left[\int ({L(\widetilde{\-G}_N(\-y))}-{L(\-G(\-y))})^2q(\-y)\,d\-y\right]^{\frac12}
%\left[\int \ln^2(p(\-y;\-v))l^2(\-y)q(\-y)\,d\-y\right]^{\frac12}\\
%
&\leq&  \|{L(\widetilde{\-G}_N(\-y))}-{L(\-G(\-y))} \|_{L^2_q}\,
\left \|\ln p(\-y;\-v) \, l(\-y) \right \|_{L^2_q},
%
%=\left[\int |1-\frac{L(\-G(\-y))}{L(\widetilde{\-G}(\-y;N))}|^2q(\-y)\,d\-y\right]^{\frac12}
%[\Var_q[L(\widetilde{\-G}(\-y;N))\ln(p(\-y;\-v))l(\-y)]+D_N^2]^{\frac12}
\end{eqnarray}
by H\"{o}lder's inequality. Using Lemma~\ref{lm:cmt} and \eqref{e:fm},
one obtains
\begin{equation}
\lim_{N\rightarrow\infty} |D_N(\-v) -D(\-v)|=0, \label{e:cov2}
\end{equation}
which together with \eqref{e:conv1} implies that $\hat{D}_N(\-v)\rightarrow D(\-v)$ in $L^2_q$,
as $M,\,N\rightarrow\infty$.
\end{proof}

We note that the analysis above is not limited to total-order
polynomial chaos expansions; it is applicable to other polynomial
chaos truncations and indeed to other approximation schemes. What is
required are the conditions of Lemma~\ref{lm:cmt}, where $N$ is any
parameter that indexes the accuracy of the forward model
approximation, such that
$\lim_{N\rightarrow\infty}\|\widetilde{\-G}_N(\-y)-\-G(\-y)\|_{L^2_q}=0$.

\subsection{Independence sampler MCMC}
\label{s:mcmc-indep}

In addition to providing a surrogate model focused on the posterior
distribution, the adaptive algorithm also makes it possible to employ
a Metropolis-Hastings independence sampler, i.e., an MCMC scheme where
the proposal distribution is independent of the present
state~\cite{tierney1994markov} of the Markov chain. When the proposal
distribution is close to the posterior, the independence sampler can
be much more efficient than a standard random walk Metropolis-Hastings
scheme~\cite{gilks1996markov,meyn1993markov,PetraGhattas}, in that it
enables larger ``jumps'' across the parameter space. This suggests
that the final biasing distribution $p(\-y;\-v)$ found by our method
can be a good proposal distribution for use in MCMC simulation.

Let the final biasing distribution obtained by
Algorithm~\ref{alg:adap} be denoted by $p(\-y;\-v_\infty)$, and let
the corresponding surrogate model be $\widetilde{\-G}_\infty(\-y)$. In
each MCMC iteration, the independence sampler updates the current
state $\-y_t$ of the Markov chain via the following steps:
\begin{enumerate}
\item Propose a candidate state $\-y^\prime$ by drawing a sample
  from $p(\-y;\-v_\infty)$.
\item Compute the Metropolis acceptance ratio: 
\begin{equation}
\alpha = \frac{L(\widetilde{\-G}_\infty(\-y^\prime)) \, \pi(\-y^\prime)}
         {L(\widetilde{\-G}_\infty(\-y_t)) \, \pi(\-y_t)} 
     \frac{p(\-y_t;\-v_\infty)}{p(\-y^\prime;\-v_\infty)} \, .
\end{equation}
\item Put $r = \max(\alpha, 1)$. Draw a number $r^\prime \sim U(0,1)$
  and set the next state of the chain to:
\begin{eqnarray*}
\-y_{t+1} = \left \{ \begin{array}{cc}
               \-y^\prime, \ & r^\prime \leq r; \\
                \-y_t, \ & r^\prime > r. 
                \end{array}
               \right .
\end{eqnarray*}
\end{enumerate}

Given the final surrogate $\widetilde{\-G}_\infty(\-y)$, one could
also use a standard random-walk MCMC sampler, or any other valid MCMC
algorithm, to explore the posterior induced by this forward model
approximation. A comparison of the Metropolis independence sampler
with an adaptive random-walk MCMC approach will be provided in
Section~\ref{s:ihcp}.

\section{Numerical examples}\label{s:example}
In this section we present two numerical examples to explore the
efficiency and accuracy of the adaptive surrogate construction
method. The first example is deliberately chosen to be low-dimensional
for illustration purposes. The second is a classic time-dependent
inverse heat conduction problem.

\subsection{Source inversion}
First we will apply our method to the contaminant source inversion
problem studied in \cite{Marzouk2007}, which uses a limited and noisy
set of observations to infer the location of a contaminant source.
Specifically, we consider a dimensionless diffusion equation on a
two-dimensional spatial domain:
\begin{equation}
 \partialderiv{u}{t} = \nabla^2 u + s(\-x,t), \qquad \-x \in D := [0,1]^2,
\label{e:diff}
\end{equation}
with source term $s(\-x,t)$. The field $u(\-x, t)$ represents the
concentration of a contaminant. The source term describes the release
of the contaminant at spatial location $\-x_{\mathrm{src}} :=
(x_1, x_2)$ over the time interval $[0,
\tau]$:
\begin{eqnarray}
s(\-x, t) = \left \{ \begin{array}{cc}
                \frac{s}{2\pi h^2} \exp \left(
                  -{|\-x_{\mathrm{src}} - \-x|^2} / {2 h^{2}} \right), \ & 0
                \leq t \leq \tau \\
                0, \ & t > \tau 
                \end{array}
               \right .
\label{eq:src}
\end{eqnarray}
Here we suppose that the source strength $s$ is known and equal to
2.0, the source width $h$ is known and equal to 0.05, and the source
location $\-x_\mathrm{src}$ is the parameter of interest.  The
contaminant is transported by diffusion but cannot leave the domain;
we thus impose homogeneous Neumann boundary conditions:
$$
\nabla u \cdot \mathbf{n} = 0 \ \ \mathrm{on} \ \partial D .
$$
At the initial time, the contaminant concentration is zero everywhere:
$$
u\left (\-x, 0 \right ) = 0.
$$
The diffusivity is spatially uniform; with a suitable scaling of
space/time, we can always take its value to be unity.

\begin{figure}[!htb]
\centering
\subfigure[Prior-based surrogate.]
{\includegraphics[width=.75\textwidth]{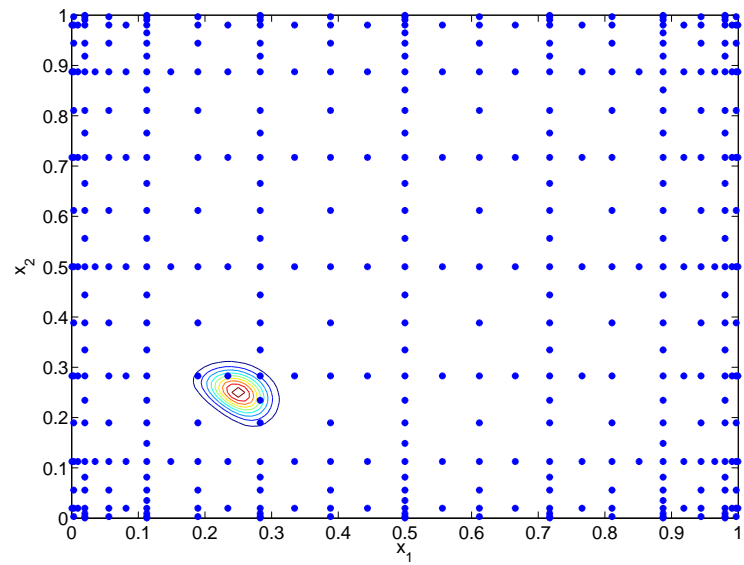}}
\subfigure[Adaptive surrogate.]
{\includegraphics[width=.75\textwidth]{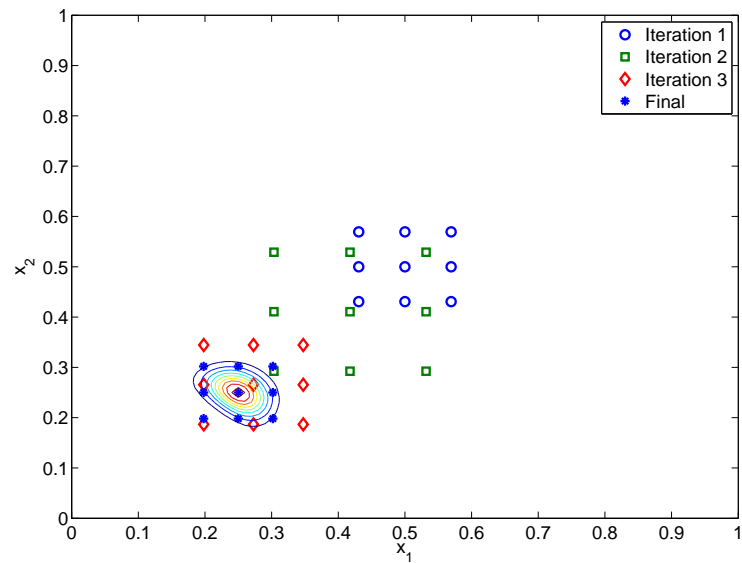}}
\caption{Source inversion problem: model evaluation points used to
  construct the prior-based and adaptive surrogates. In the bottom
  figure (adaptive method), the circles, squares, and diamonds are the
  points evaluated in the first, second, and third iterations,
  respectively. Contours of the posterior probability density are
  superimposed on each figure. }
\label{f:pts_eval}
\end{figure}

\begin{figure}[!htb]
  \centering 
  \subfigure[``True'' posterior density (solid line), compared with
  the posterior density obtained via the prior-based surrogate (dotted
  line).]
  {\includegraphics[width=.75\textwidth]{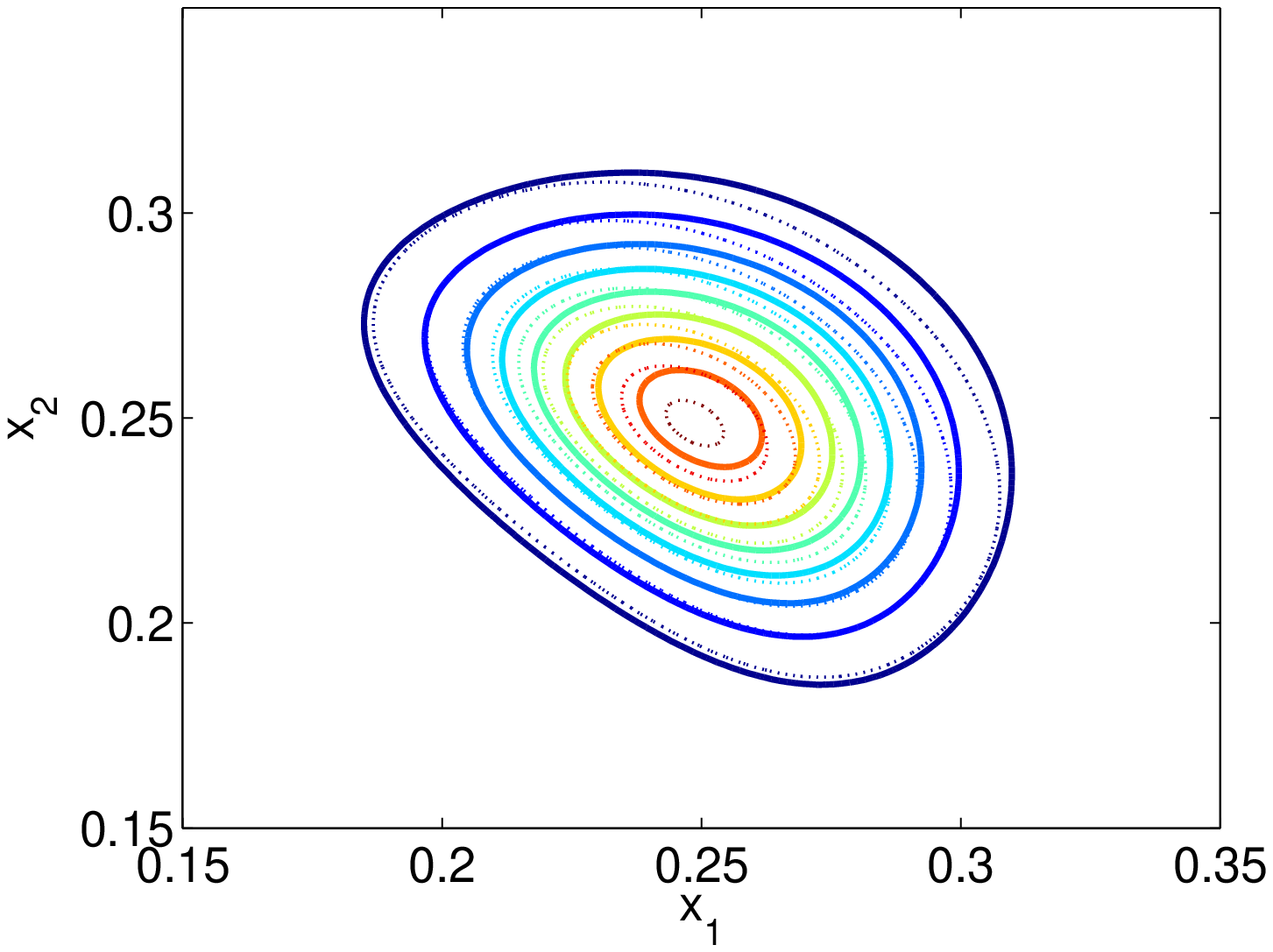}}
  \subfigure[``True'' posterior density (solid line), compared with
  the posterior density obtained via the adaptively-constructed
  surrogate (dashed line).]
  {\includegraphics[width=.75\textwidth]{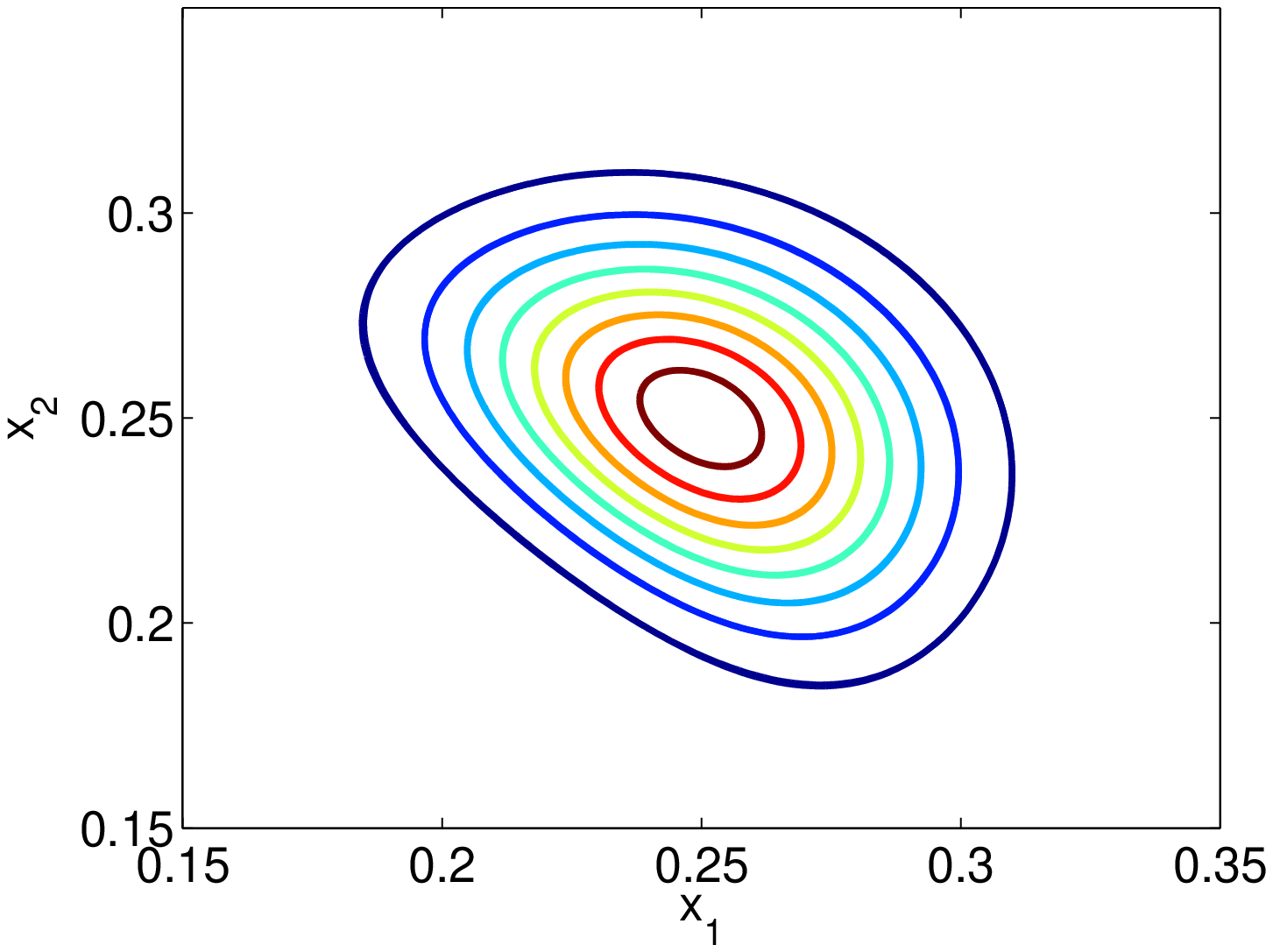}}
\caption{Source inversion problem: posterior density of
  $\mathbf{x}_\mathrm{src}$ obtained with the three different
  approaches. In the bottom figure, the two sets of contours are
  virtually indistinguishable.}
\label{f:post}
\end{figure}

Sensors that measure local concentration values are placed on a
uniform $3 \times 3$ grid covering $D$, and sensor readings are
provided at two successive times, $t=0.1$ and $t=0.2$, resulting in a
total of 18 measurements.  The forward model thus maps the source
position to the values of the field $u(\-x,t)$ at the prescribed
measurement locations and times, while the inverse problem consists of
inferring the source position from noisy measurements.  We write the
forward model as $\-d=\-G(\-x_\mathrm{src})+\bolde$, where $\-d$ is
the vector collecting all the measurements and $\bolde$ is the
measurement error.  Each component of $\bolde$ is assumed to be an
independent zero-mean Gaussian random variable: $\epsilon_i \sim
N(0,\sigma^2)$ with $\sigma=0.1$ for $i=1 \ldots 18$.  In this example
we generate simulated data $\-d$ by solving the forward model with
$\-x_\mathrm{src}=(0.25,\,0.25)$ and adding noise. To complete the
Bayesian setup, we take the prior to be a uniform distribution over
$D$; that is, $x_j \sim U(0,1)$ for $j=1,\,2$.

First we characterize the posterior distribution using the proposed
adaptive method.  We fix the polynomial order of the surrogates
$\widetilde{\-G}_k( x_1, x_2 )$ to $N=4$ and take the biasing
distribution to be an uncorrelated Gaussian:
\begin{equation}
p(x_1,x_2; \-v) = \frac1{2\pi\sigma_1\sigma_2}\exp\left(-\frac{(x_1-\mu_1)^2}{2\sigma_1^2}-\frac{(x_2-\mu_2)^2}{2\sigma_2^2}\right)\,,
\end{equation}
where the means $\mu_1$, $\mu_2$ and standard deviations $\sigma_1$,
$\sigma_2$ comprise the reference parameters $\-v$.
The initial biasing distribution is centered at the prior mean and has
small variance; that is, we choose $\mu_1= \mu_2=0.5$ and $\sigma_1=
\sigma_2=0.05$. We set  $\rho = 0.05$, $\gamma = 10^{-3}$, and 
put $\delta = (\lambda_0-1)/10$ (namely, we require $\lambda$ to reach $1$ in at most 10 iterations). 
In each iteration, a $3\times3$ tensor product
Gaussian-Hermite quadrature rule is used to construct a Hermite polynomial
chaos surrogate, resulting in nine true model evaluations; $M=5\times
10^4$ surrogate samples are then employed to estimate the reference
parameters. 
It takes four iterations for the algorithm to converge, and its main
computational cost thus consists of evaluating the true model 36 times.

As a comparison, we also construct a polynomial surrogate with respect
to the uniform prior distribution. Here we use a surrogate composed of
total order $N=9$ Legendre polynomials, and compute the polynomial
coefficients with a level-6 sparse grid based on Clenshaw-Curtis
quadrature, resulting in 417 true model evaluations---about 12 times
as many as the adaptive method. These values were chosen so that the
prior-based and adaptive polynomial surrogates have comparable (though
not exactly equal) accuracy.

In Figure~\ref{f:pts_eval}, we show the points in parameter space at
which the true model was evaluated in order to construct the two types
of surrogates. Contours of the posterior density are superimposed on
the points. It is apparent that in the prior-based method, although
417 model evaluation points are used, only five of them actually fall
in the region of significant posterior probability. With the adaptive
method, on the other hand, 8 of the 36 model evaluations occur in the
important region of the posterior distribution. Figure~\ref{f:post}
shows the posterior probability densities resulting from both types of
surrogates. Since the problem is two-dimensional, these contours were
obtained simply by evaluating the posterior density on a fine grid,
thus removing any potential MCMC sampling error from the problem.
Also shown in the figure is the posterior density obtained with direct
evaluations of the true forward model, i.e., the ``true'' posterior.
While both surrogates provide reasonably accurate posterior
approximations, the adaptive method is clearly better; its posterior
is essentially identical to the true posterior. Moreover, the adaptive
method requires an order of magnitude fewer model evaluations than the
prior-based surrogate.

\subsection{Inverse heat conduction}
\label{s:ihcp}
Estimating temperature or heat flux on an inaccessible boundary from
the temperature history measured inside a solid gives rise to an
inverse heat conduction~(IHC) problem. These problems have been
studied for several decades due to their significance in a variety of
scientific and engineering
applications~\cite{beck1985inverse,Ozisik2000,Wang200515,macayeal1991paleothermometry}.
An IHC problem becomes nonlinear if the thermal properties are
temperature-dependent~\cite{beck1970nonlinear,carasso1982determining};
this feature renders inversion significantly more difficult than in the linear
case.  In this example we consider a one-dimensional heat conduction
equation:
\begin{equation}
\partialderiv{u}t=\partialderiv{}x\left[c(u)\partialderiv{u}x\right],
\label{e:heat}
\end{equation}
where $x$ and $t$ are the spatial and temporal variables, $u(x,t)$ is
the temperature, and 
$$
c(u) := \frac{1}{1+u^2}
$$ 
is the temperature-dependent thermal conductivity, all in dimensionless form. 
The equation is subject to initial condition $u(x,0)=u_0(x)$ and 
Neumann boundary conditions:
\begin{subequations}
\begin{gather}
\partialderiv{}xu(0,t)=q(t),\\
\partialderiv{}xu(L,t) = 0,
\end{gather}
\end{subequations}
where $L$ is the length of the medium. In other words, one end ($x=L$)
of the domain is insulated and the other ($x=0$) is subject to heat
flux $q(t)$. Now suppose that we place a temperature sensor at
$x=x_s$. The goal of the IHC problem is to infer the heat flux $q(t)$
for $t\in[0,\,T]$ from the temperature history measured at the sensor
over the same time interval. A schematic of this problem is shown in
Figure~\ref{f:heat}.

\begin{figure}[htb]
\centering
\includegraphics[width=.75\textwidth]{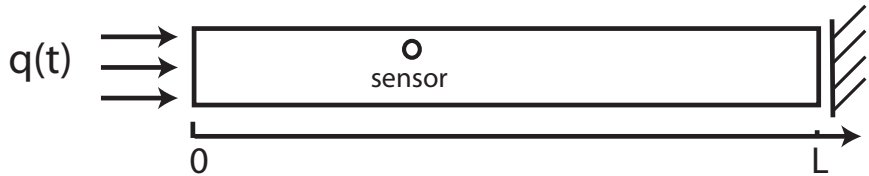}
\caption{Schematic of the one-dimensional heat conduction problem.}
\label{f:heat}
\end{figure}

For the present simulations, we put $L=1$ and $T=1$ and let the initial condition be $u_0(x)=0$.
We parameterize the flux signal with a Fourier series:
\begin{equation}
q(t)= a_0+\sum_{j=1}^{N_f} \left ( a_j\cos(2j\pi t/T)+b_j\sin(2j\pi
  t/T) \right ),\qquad 0\leq t\leq T,
\label{e:flux}
\end{equation}
where $a_j$ and $b_j$ are the coefficients of the cosine and sine
components, respectively, and $N_f$ is the total number of Fourier
modes.  In the following tests, we will fix $N_f=4$; the inverse
problem is thus 9-dimensional. The sensor is placed at $x_s=0.4$,
the temperature is measured at 50 regularly spaced times over the time
interval $[0,T]$, and the error in each measurement is assumed to be
an independent zero-mean Gaussian random variable with variance
$\sigma^2 = 10^{-2}$.

To generate data for inversion, the ``true'' flux is chosen to be 
\begin{equation}
q_\mathrm{true}(t)= \sum_{j=1}^{4} (1.5\cos(2j\pi t)+1.5\sin(2j\pi t))\,,
\end{equation}
i.e., $a_0=0$, $a_j=1.5$, and $b_j=1.5$ for $j=1 \ldots 4$.
Figure~\ref{f:solution} shows the entire solution of \eqref{e:heat}
with the prescribed ``true'' flux, with the sensor location indicated
by the dashed line. The inverse problem becomes more ill-posed as the
sensor moves to the right, away from the boundary where $q$ is
imposed; information about the time-varying input flux is
progressively destroyed by the nonlinear
diffusion. Figure~\ref{f:temps} makes this fact more explicit, by
showing the temperature history at $x=0$ (i.e., the boundary subject
to the heat flux) and at $x=0.4$ (where the sensor is placed). The
data for inversion are generated by perturbing the latter profile with
the i.i.d.\ observational noise. A finer numerical discretization of
\eqref{e:heat} is used to generate these data than is used in the
inference process.
To complete the Bayesian setup, we endow the Fourier coefficients with
independent Gaussian priors that have mean zero and variance $2$.

\begin{figure}[!htb]
\centering
\subfigure[Solution of the nonlinear diffusion equation \eqref{e:heat} as a function
of space and time. The dashed line indicates the location of the
sensor.]
{\includegraphics[width=.75\textwidth]{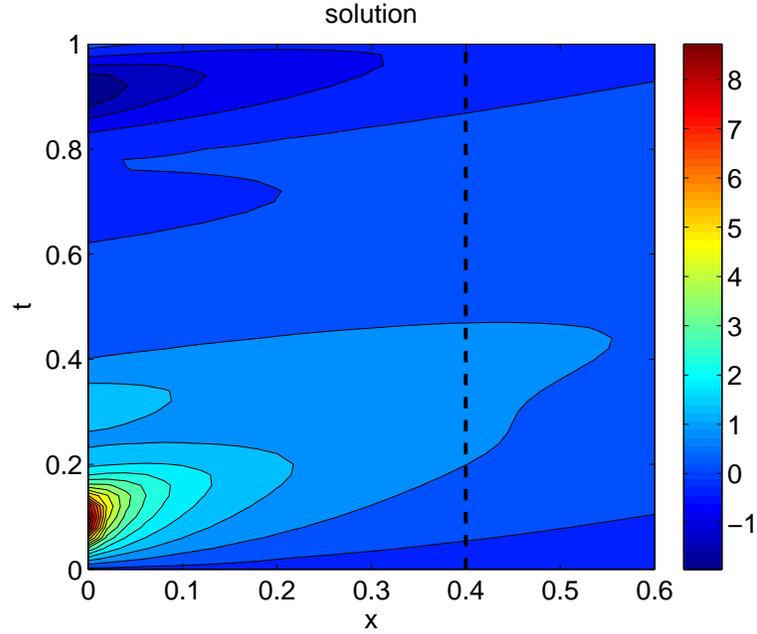}
\label{f:solution}}
\centering
\subfigure[Temperature history at $x=0$ (the boundary where the
time-dependent heat flux $q(t)$ is imposed, solid line); and at
$x=0.4$ (location of the sensor, dashed line).]
{\includegraphics[width=.7\textwidth]{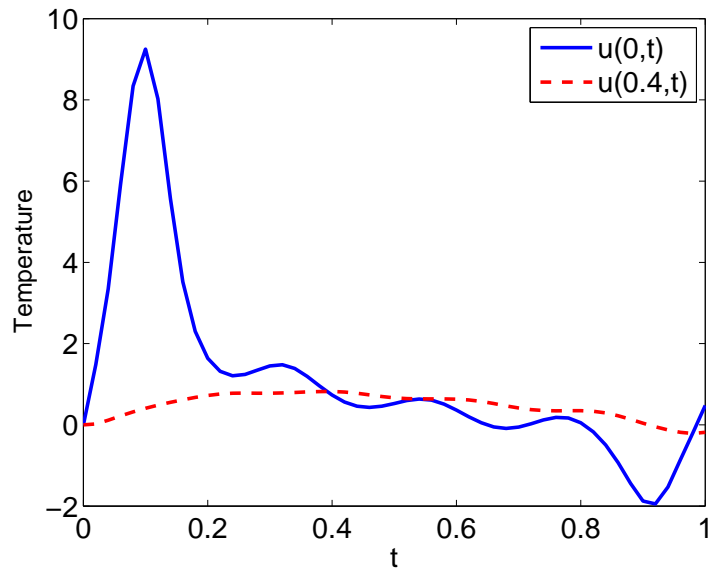}
\label{f:temps}}
\caption{Forward solution of the transient heat conduction problem.}
\end{figure}

\begin{table}
\centering

  \begin{tabular}{c|c|c|c}
    \hline
		& number of & $\dis_\mathrm{KL}(\pi^\ast \| \,
                \pi_\mathrm{PC})$ & $\dis_\mathrm{KL}(\pi^\ast \| \, \pi_\mathrm{PC})$ \\
    surrogate & model evaluations& $\pi(a_1,b_1)$ & $\pi(a_4,b_4)$ \\ \hline
    {prior-based}&  &  \\
     $N=3, S=6$  &11833 &124 &1.89\\ \hline
      \textbf{prior-based}& &  \\
     {$\bm{N=5,\, S=7}$}  &35929 &8.37& 0.383 \\ \hline
	%	        prior-based PC&  &  \\
    % $N=3$, $l=8$  &97561 &N/A \\ \hline
   { adaptive} &  &  \\
   {$N=2,  S= 3$} & $2445$ & 0.0044&0.0129\\
    \hline
    \textbf{adaptive}&  &  \\
    {$\bm{N=2,\, S=5}$} & $4459$ & 0.0032&0.0127\\
    % \hline
    %  adaptive&  &  \\
    % $N=2$, $S=7$& $14115$ & 0.0016&0.0127\\
    \hline
  \end{tabular}
	
	\medskip

	\caption{Cost and performance comparison of prior-based and
          adaptively-constructed surrogates.}
        \label{t:numevals}
\end{table}

We now use the adaptive algorithm to construct a localized polynomial
chaos surrogate, and compare its computational cost and performance to
that of a prior-based surrogate. In both cases, we use Hermite
polynomial chaos to approximate the forward model, and sparse grids
based on tensorization of univariate delayed Kronrod-Patterson
rules~\cite{petras2003smolyak} to compute the polynomial coefficients
(non-intrusively). To test the prior-based method, we use two
different total-order truncations of the PC expansion, one at $N=3$
(with sparse grid level $S=6$) and the other at $N=5$ (with sparse
grid level $S=7$). Sparse grid levels were chosen to ensure relatively
small aliasing error in both cases.

To run the adaptive algorithm, we set  $\rho = 0.05$, $\gamma = 10^{-3}$, and
$\delta = (\lambda_0-1)/20$ (see Algorithm~\ref{alg:adap}) and we use
$M=10^{5}$ samples
for importance sampling at each step. The initial biasing distribution
is centered at the prior mean, with the variance of each component set
to $0.5$.
As described in \eqref{e:normal}, the biasing distributions are chosen
simply to be uncorrelated Gaussians. The optimization procedure takes
14 iterations to converge, and at each iteration, a new surrogate with
$N=2$ and $S=3$ is constructed with respect to the current biasing
distribution. Once the final biasing distribution is obtained, we
construct a corresponding \textit{final} surrogate of the same
polynomial order, $N=2$. Here we typically employ the same sparse grid
level as in the adaptive iterations, but we also report results for a
higher sparse grid level ($S=5$) just to ensure that aliasing errors
are small. The total number of full model evaluations associated with
the adaptive procedure, contrasted with the number used to construct
the prior-based surrogates, is shown in the second column of
Table~\ref{t:numevals}.

%\begin{figure}
%\centerline{\includegraphics[width=1\textwidth]{figs/posterior2.eps}}
%\caption{1-D and 2-D posterior marginals of the coefficients in the inverse heat conduction problem,
%obtained using PC built based on the surrogate distribution with $N=2$ and $L=5$.}
%\end{figure}

With various final surrogates $\widetilde{\-G}$ in hand (both
prior-based and adaptively constructed), we now replace the exact
likelihood function $L(\-G)$ with $L(\widetilde{\-G})$ to generate a
corresponding collection of posterior distributions for comparison. We
use a delayed-rejection adaptive Metropolis (DRAM) MCMC
algorithm~\cite{haario:2006:dra} to draw $5\times10^5$ samples from each
distribution, and discard the first $10^4$ samples as burn-in.  To
examine the results, we cannot visualize the 9-dimensional posteriors
directly; instead we consider several ways of extracting posterior
information from the samples.

First, we focus on the Fourier coefficients
directly. Figure~\ref{fig:coefficients} shows kernel density
estimates~\cite{ihlerkernel} of the marginal posterior densities of
Fourier coefficients of the heat flux $q(t)$. Figure~\ref{f:a1b1}
shows coefficients of the lowest frequency modes, while
Figure~\ref{f:a4b4} shows coefficients of the highest-frequency
modes. In each figure, we show the posterior densities obtained by
evaluation of the exact forward forward model, evaluation of the
adaptive surrogate, and evaluation of the prior-based surrogate. The
latter two surrogates correspond to the second and fourth rows of
Table~\ref{t:numevals} (marked in bold type). Even though
construction of the prior-based surrogate employs more than 6 times as
many model evaluations as the adaptive algorithm, the adaptive
surrogate is far more accurate. This assessment is made quantitative
by evaluating the K-L divergence from the exact posterior distribution
to each surrogate-induced posterior distribution (focusing only on the
two-dimensional marginals). Results are shown in the last two columns
of Table~\ref{t:numevals}. By this measure, the adaptive surrogate is
three orders of magnitude more accurate in the low-frequency modes and
at least an order of magnitude more accurate in the high-frequency
modes. (Here the K-L divergences have also been computed from the
kernel density estimates of the pairwise posterior marginals. Sampling
error in the K-L divergence estimates is limited to the last reported
digit.) 
The difference in accuracy gains between the low- and high-frequency
modes may be due to the fact that the posterior concentrates more
strongly for the low-frequency coefficients, as these are the modes
for which the data/likelihood are most informative. Thus, while the
adaptive surrogate here provides higher accuracy in \textit{all} the
Fourier modes, improvement over the prior surrogate is expected to be
most pronounced in the directions where posterior concentration is
greatest.

%%%%

\begin{figure}[htb]
\centering
\subfigure[$\pi^\ast(a_1, b_1).$]
{\includegraphics[width=.75\textwidth]{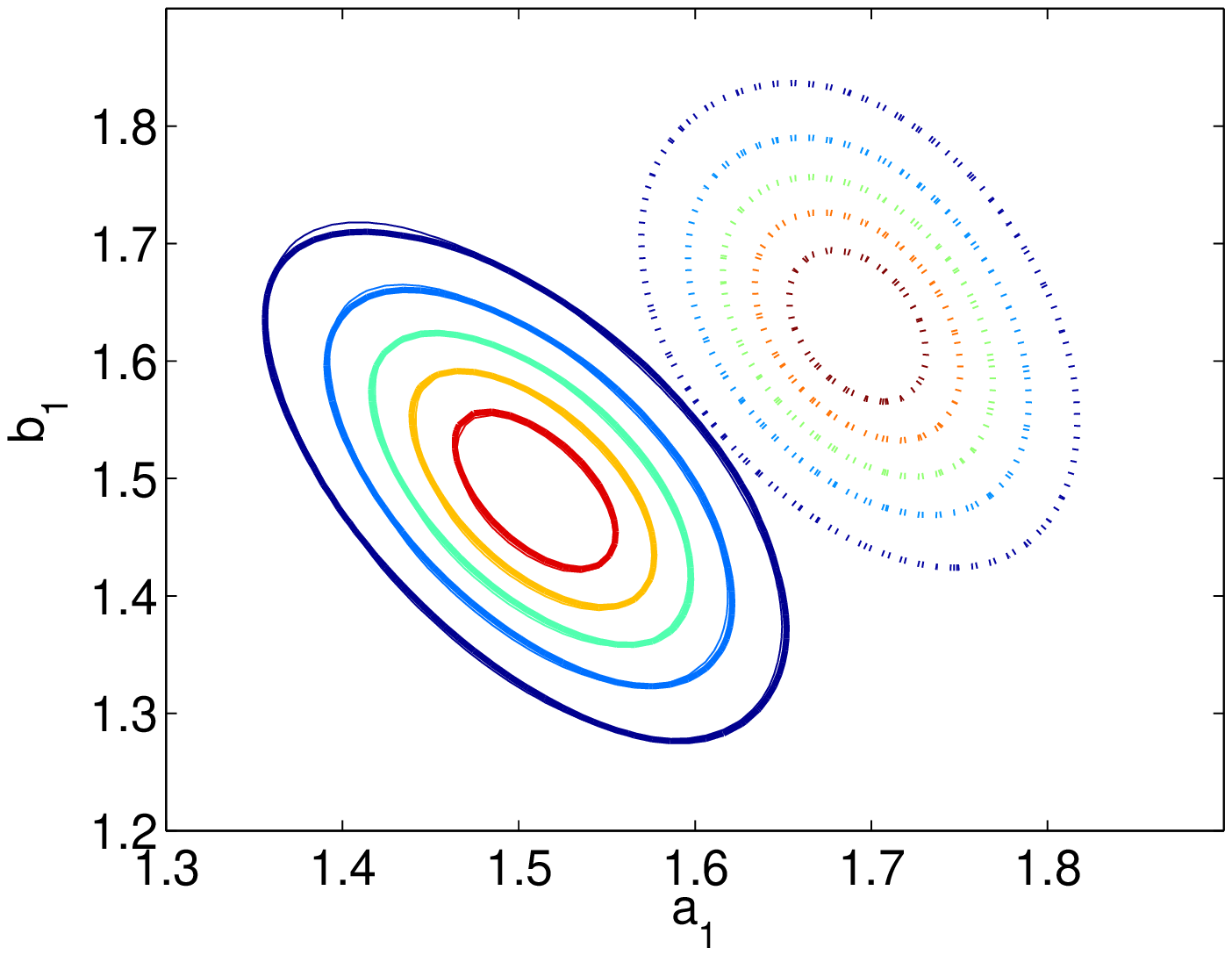}}
\label{f:a1b1}
\centering
\subfigure[$\pi^\ast(a_4, b_4).$ ]
{\includegraphics[width=.75\textwidth]{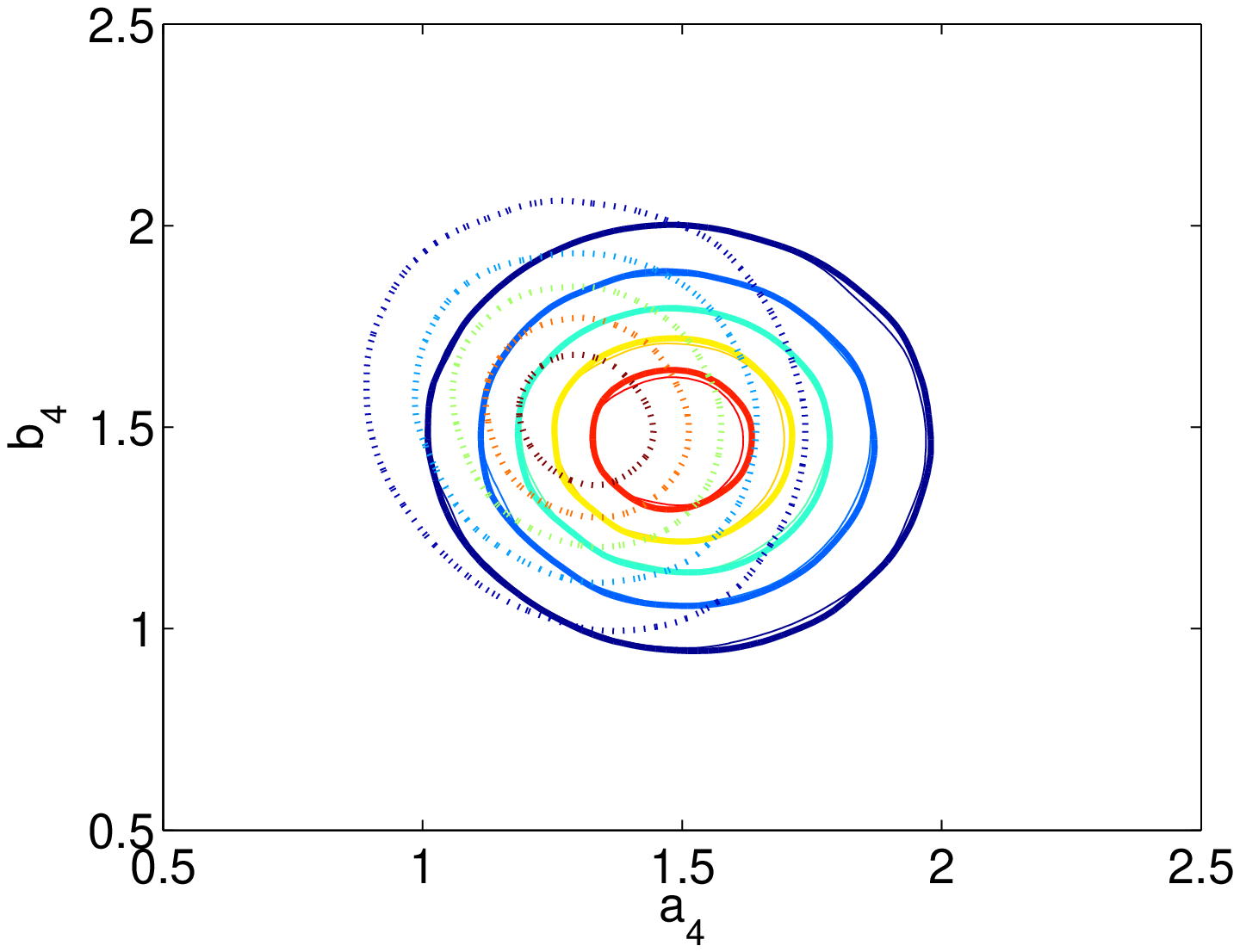}}
\label{f:a4b4}
\caption{Inverse heat conduction problem: marginal posterior densities
  of pairs of Fourier coefficients. Solid lines (thicker) are results
  of the exact forward model; solid lines (thinner) are from the
  adaptive surrogate; dotted lines are from the prior-based
  surrogate.}
\label{fig:coefficients}
\end{figure}

%%%

To further assess the performance of the adaptive method, we use
posterior samples of the Fourier coefficients to reconstruct posterior
moments of the heat flux itself, again using the exact forward model,
the adaptively-constructed surrogate, and the prior
surrogate. Figure~\ref{f:meanq} shows the posterior mean,
Figure~\ref{f:meanq} shows the posterior variance, and
Figure~\ref{f:skewq} shows the posterior skewness; these are moments
of the \textit{marginal} posterior distributions of heat flux $q(t)$
at any given time $t$. Again, we show results only for the surrogates
identified in the second and fourth lines of
Table~\ref{t:numevals}. The adaptively-constructed (and posterior-focused)
surrogate clearly outperforms the prior-based surrogate. Note that the
nonzero skewness is a clear indicator of the non-Gaussian character of
the posterior.

Moving from moments of the marginal distributions to correlations
between heat flux values at different times, Figure~\ref{f:cov_apc}
shows the posterior autocovariance of the heat flux computed with the
adaptive surrogate, which also agrees well with the values computed
from the true model.

\begin{figure}[htb]
\centerline{\includegraphics[width=.75\textwidth]{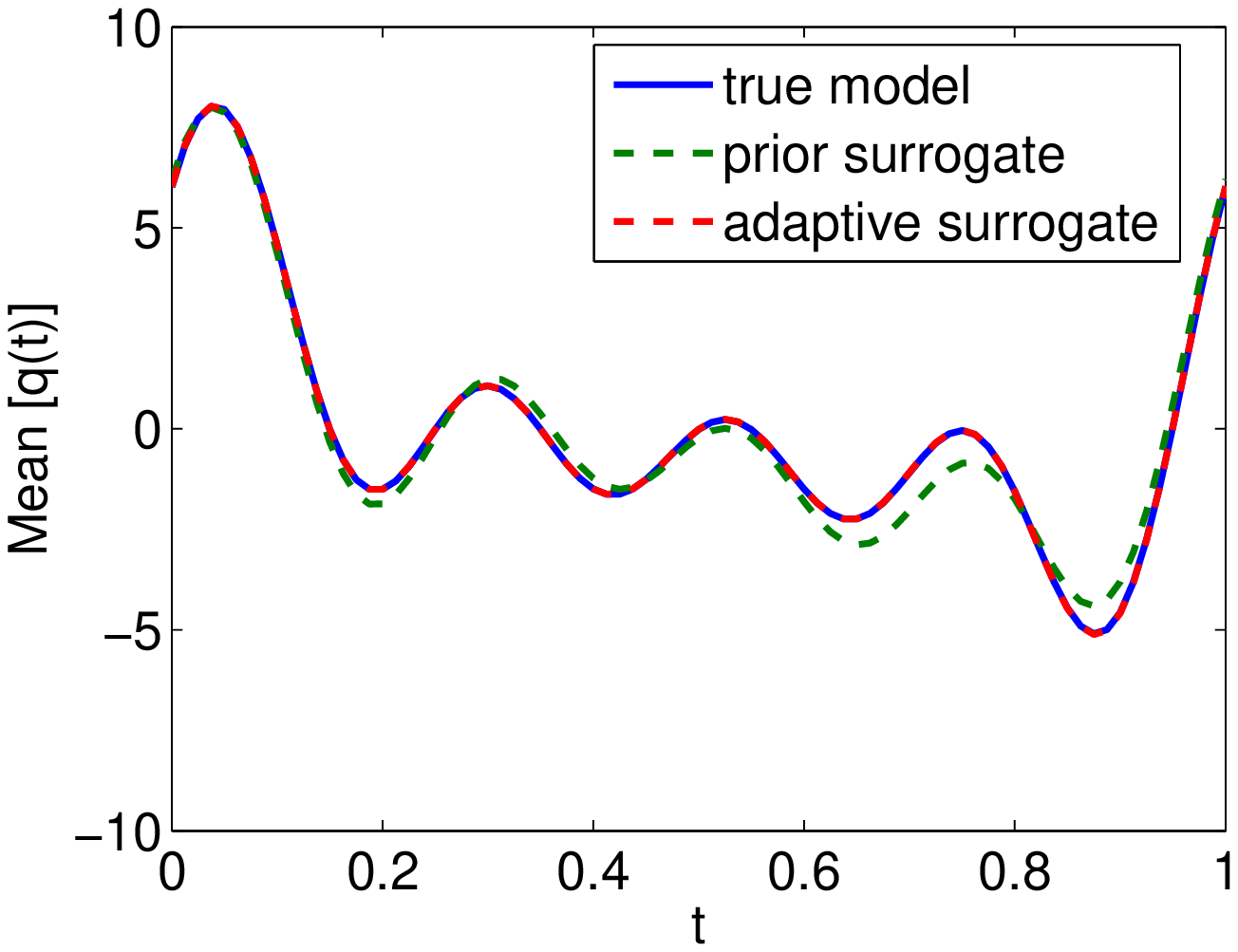}}
\caption{Inverse heat conduction problem: posterior mean of the flux
  $q(t)$ computed with the true model, the prior-based surrogate, and
  the adaptively-constructed surrogate.}
\label{f:meanq}
\end{figure}

\begin{figure}[htb]
\centerline{\includegraphics[width=.75\textwidth]{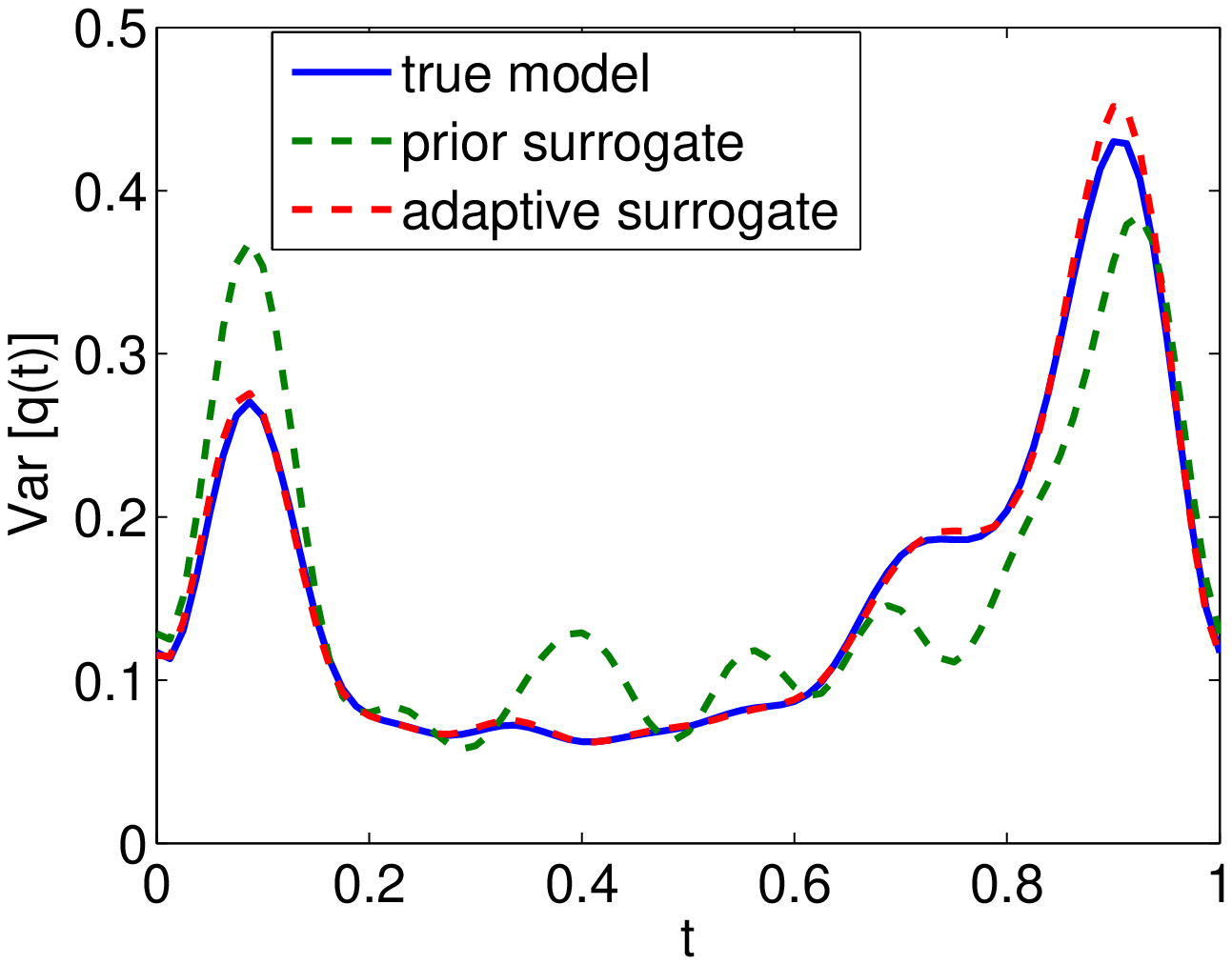}}
\caption{Inverse heat conduction problem: posterior variance of the flux
  $q(t)$ computed with the true model, the prior-based surrogate, and
  the adaptively-constructed surrogate.}
\label{f:varq}
\end{figure}

\begin{figure}[htb]
\centerline{\includegraphics[width=.75\textwidth]{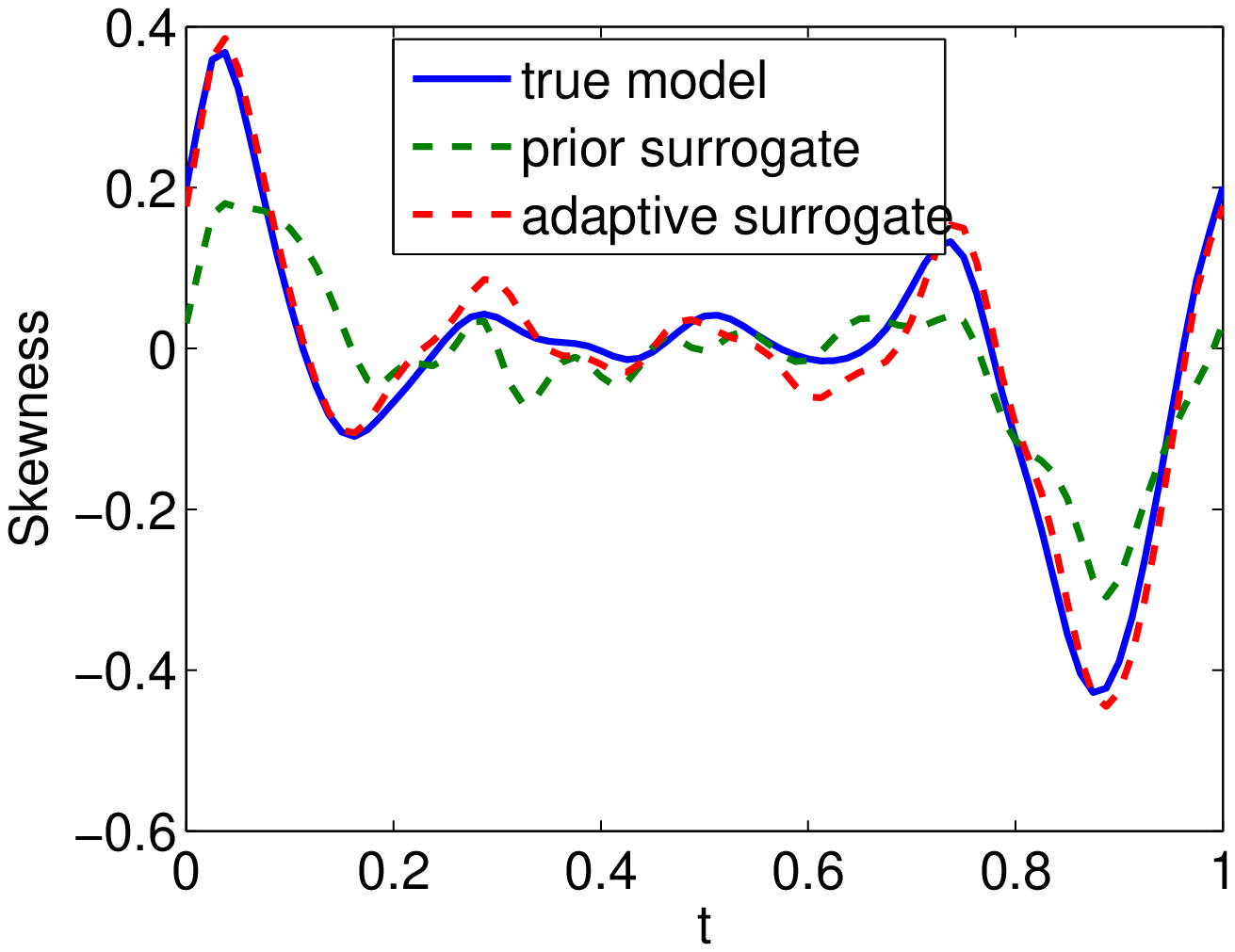}}
\caption{Inverse heat conduction problem: posterior skewness of the flux
  $q(t)$ computed with the true model, the prior-based surrogate, and
  the adaptively-constructed surrogate.}
\label{f:skewq}
\end{figure}

% \begin{figure}
% %\centerline{\includegraphics[width=.75\textwidth]{figs/bc_cov_prior}}
% %\caption{Posterior covariance function of the flux $q(t)$. Solid (blue) line is the results with the true model and dashed (red) one is those with the prior-based PC.}
% %\label{f:bc_apc_prior}
% \centerline{\includegraphics[width=.55\textwidth]{figs/covariance_true}
% %\includegraphics[width=.12\textwidth]{figs/colorbar}
% \includegraphics[width=.55\textwidth]{figs/covariance_PC}}
% \caption{Posterior autocovariance function of the flux $q(t)$. Left is the result of the true model and right is that of the localized surrogate.}
% \label{f:cov_apc}
% \end{figure}

\begin{figure}[htb]
\centerline{\includegraphics[width=.75\textwidth]{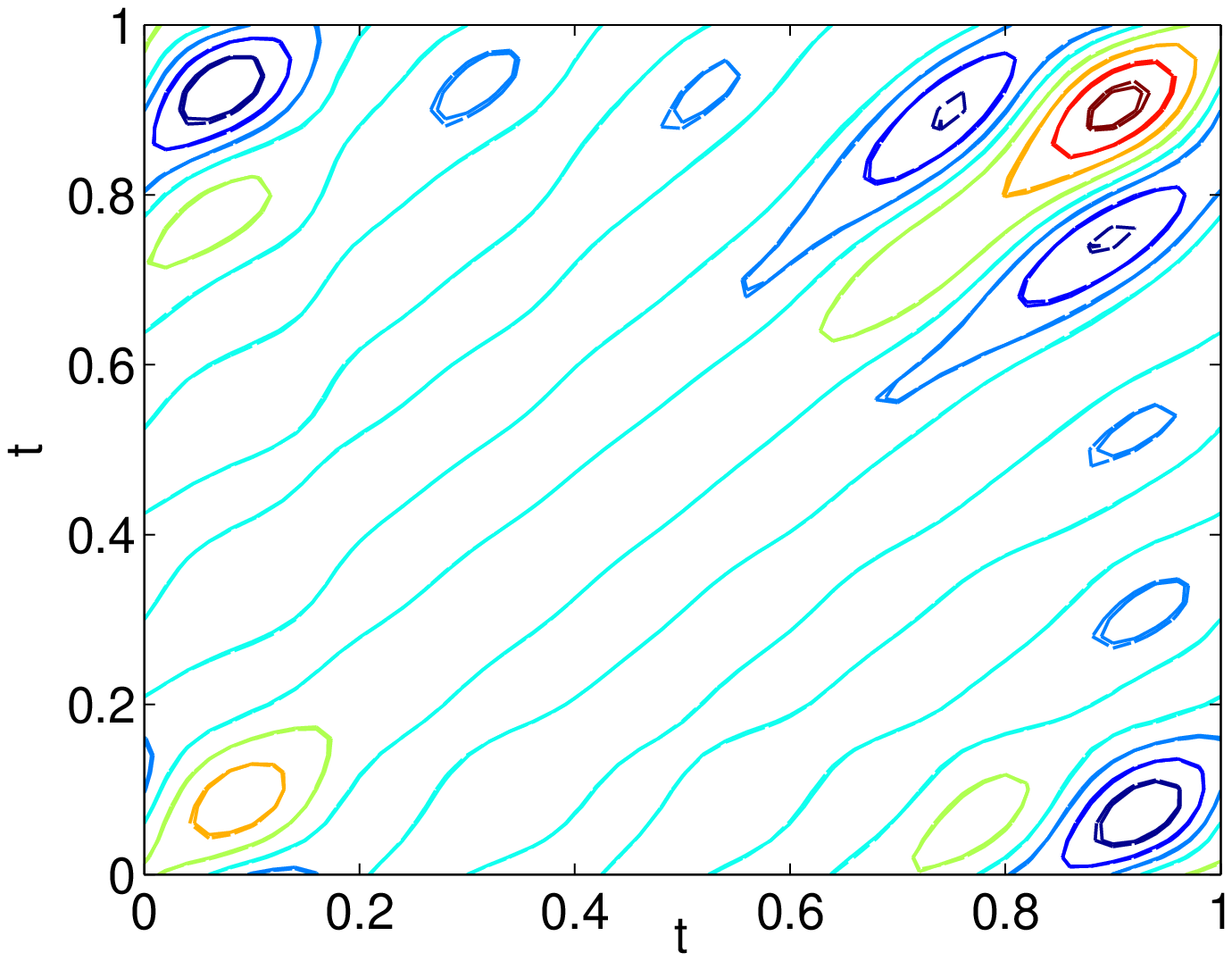}}
  \caption{Inverse heat conduction problem: posterior covariance of the flux $q(t)$. Solid lines are
    computed from the exact-model posterior, while dashed lines are
    computed with the adaptive surrogate.}
\label{f:cov_apc}
\end{figure}

% *** It is worth noting here that I also tried to find MAP with a
% global optimization algorithm and it took about 4500 model evaluations
% to just find the MAP!  Not to mention that the covariances are yet to
% be determined.***

Finally, we evaluate the MCMC independence sampler proposed in
Section~\ref{s:mcmc-indep}, comparing its performance with that of the
adaptive random-walk sampler (DRAM). For the present inverse heat
conduction problem, the mixing of each sampler is essentially
independent of the particular posterior (exact or surrogate-based) to
which it is applied; we therefore report results for the adaptive
surrogate only. Figure~\ref{f:autocorr} plots the empirical
autocorrelation of the MCMC chain as a function of lag, computed from
$5\times10^{5}$ iterations of each sampler. We focus on the MCMC chain
of $a_1$, but the relative autocorrelations of other Fourier coefficients are
similar. Rapid decay of the autocorrelation is indicative of good
mixing; MCMC iterates are less correlated, and the variance of any
MCMC estimate at a given number of iterations is
reduced. Figure~\ref{f:autocorr} shows that autocorrelation decays
considerably faster with the independence sampler than with the
random-walk sampler, which suggests that the final biasing
distribution computed with the adaptive algorithm can indeed be a good
proposal distribution for MCMC.

\begin{figure}[htb]
\centering
\includegraphics[width=.75\textwidth]{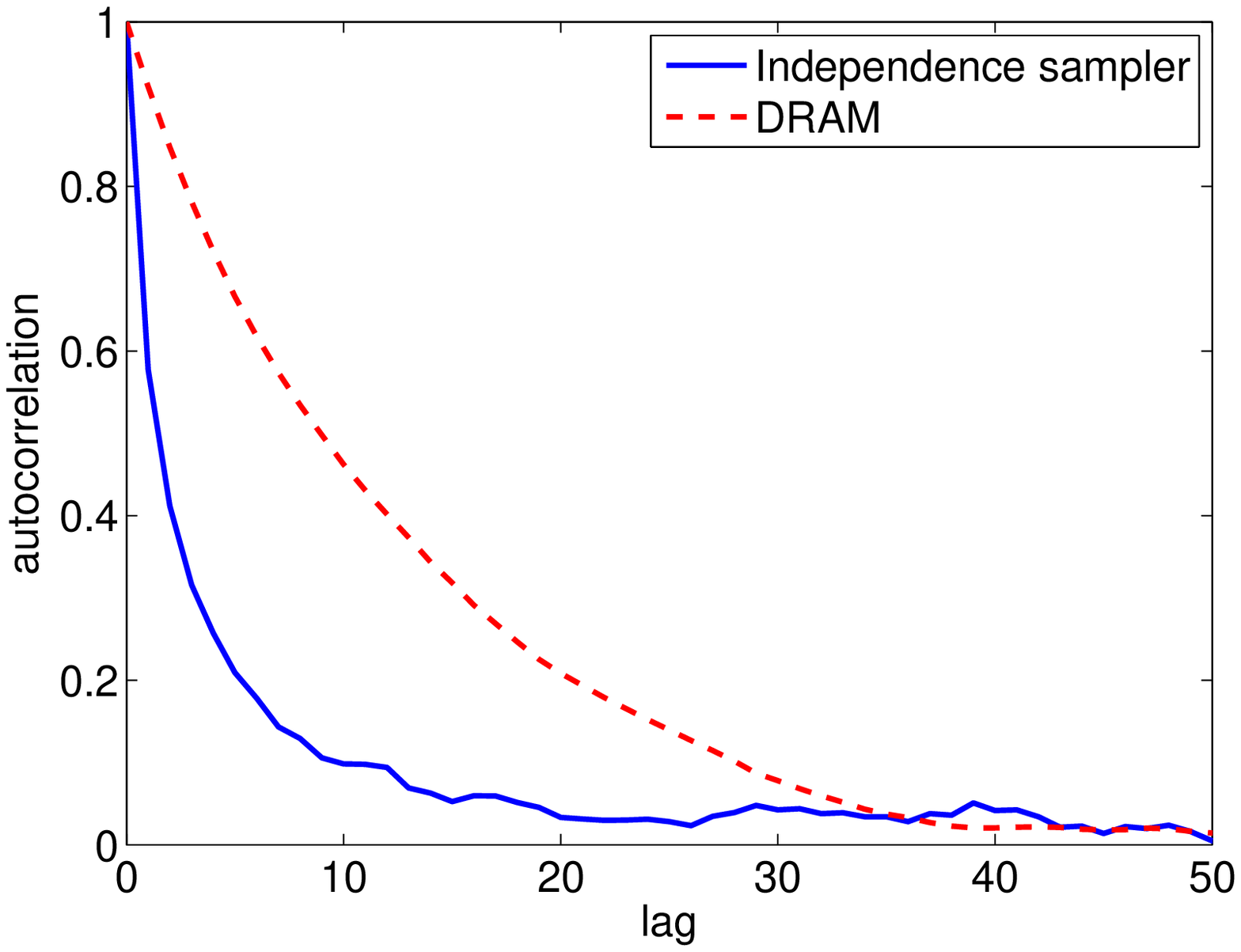}
\caption{Chain autocorrelation of an MCMC independence sampler derived
  from the adaptive algorithm, versus an adaptive random-walk MCMC
  sampler.}
\label{f:autocorr}
\end{figure}

\section{Conclusions}
\label{s:conclusion}

This paper has developed an efficient adaptive approach for
approximating computationally intensive forward models for use in
Bayesian inference. These approximations are used as surrogates for
the exact forward model or parameter-to-observable map, thus making
sampling-based Bayesian solutions to the inverse problem more
computationally tractable.

The present approach is \textit{adaptive} in the sense that it uses
the data and likelihood function to focus the accuracy of the forward
model approximation on regions of high posterior probability. This
focusing is performed via an iterative procedure that relies on
stochastic optimization. In typical inference problems, where the
posterior concentrates on a small fraction of the support of the prior
distribution, the adaptive approach can lead to significant gains in
efficiency and accuracy over previous methods. Numerical
demonstrations on inference problems involving partial differential
equations show order-of-magnitude increases in computational efficiency
(as measured by the number of forward solves) and accuracy (as
measured by posterior moments and information divergence from the
exact posterior) over prior-based surrogates employing comparable
approximation schemes.

The adaptive algorithm generates a finite sequence of biasing
distributions from a chosen parametric family, and accelerates the
identification of these biasing distributions by constructing
approximations of the forward model at each step. The final biasing
distribution in this sequence minimizes Kullback-Leibler divergence
from the true posterior; convergence to this minimizer is assured as
the number of samples (in an internal importance sampling estimator)
goes to infinity and the accuracy of the local surrogate is
increased. As a byproduct of the algorithm, the final biasing
distribution can also serve as a useful proposal distribution for
Markov chain Monte Carlo exploration of the posterior distribution.

% Discussion:
Since the adaptive approach relies on concentration of the posterior
relative to the prior, it is best suited for inference problems where
the data are informative in the same relative sense. Yet most
``useful'' inference problems will fall into this category.  The more
difficult it is to construct a globally accurate surrogate (for
instance, as the forward model becomes more nonlinear) and the more
tightly the posterior concentrates, the more beneficial the adaptive
approach may be. Now in an ill-posed inverse problem, the posterior
may concentrate in some directions but not in others; for instance,
data in the inverse heat conduction problem are less informative about
higher frequency variations in the inversion parameters. Yet
significant concentration does occur overall, and it is largest in the
directions where the likelihood function varies most and is thus most
difficult to approximate. This correspondence is precisely to the
advantage of the adaptive method.

We note that the current algorithm does not require access to
derivatives of the forward model. If derivatives were available and
the posterior mode could be found efficiently, then it would be
natural to use a Laplace approximation at the mode to initialize the
adaptive procedure. Also, an expectation-maximization (EM) algorithm
could be an interesting alternative to the adaptive importance sampling
approach used to solve the optimization problem in~\eqref{e:maxD}.
Localized surrogate models could be employed to accelerate
computations within an EM approach, just as they are used in the
present importance sampling algorithm.

% We feel that comparing the computational efficiency of the two
% algorithms is beyond the scope of this work.

Finally, we emphasize that while the current numerical demonstrations
used polynomial chaos approximations and Gaussian biasing
distributions, the algorithm presented here is quite general. Future
work could explore other families of biasing distributions and other
types of forward model approximations---even projection-based model
reduction schemes. Simple parametric biasing distributions could also
be replaced with finite mixtures (e.g., Gaussian mixtures), with a
forward model approximation scheme tied to the components of the
mixture. Another useful extension of the adaptive algorithm could
involve using full model evaluations from previous iterations to
reduce the cost of constructing the local surrogate at the current
step.

\appendix
\section{Proof of Lemma~\ref{lm:cmt}}
We start with  $L(\-G)$ being uniformly continuous, which means that for any $\epsilon>0$, 
there exists a $\delta>0$ such that,
 for any $|\-G_N-\-G|<\delta$, one has
$|L(\-G_N)-L(\-G)|<\sqrt{\epsilon/2}$. 
On the other hand, $\-G_N(\-y)\rightarrow \-G(\-y)$ in $L^2_q$ as $N\rightarrow\infty$,
implying that $\-G_N(\-y)\rightarrow \-G(\-y)$ in probability as $N\rightarrow\infty$;
therefore, for the given $\epsilon$ and $\delta$, there exists a positive integer $N_o$ such that,
for all $N>N_o$, 
$\P[|\-G_N(\-y)- \-G(\-y)|>\delta]<\epsilon/4$.
Let $\Omega  := \{\-z : |\-G_N(\-y)- \-G(\-y)|<\delta\}$ and
let ${\Omega^\ast}$ be the complement of $\Omega$ in the support of $q$. Then we can
write 
\begin{multline}
\|L(\-G_N(\-y))-L(\-G(\-y))\|^2_{L^2_q} \\=%\int (L(\-G_N(\-y)-L(\-G(\-y)))^2 q(\-y) d\-y\\
\int_{\Omega} (L(\-G_N(\-y))-L(\-G(\-y)))^2 q(\-y) d\-y +\int_{{\Omega^\ast}} (L(\-G_N(\-y))-L(\-G(\-y)))^2 q(\-y) d\-y
\label{e:pr}
\end{multline}
The first integral on the right hand side of \eqref{e:pr} is smaller than $\epsilon/2$ by design.
Now recall that the likelihood function $L(\cdot)$ is bounded, and without loss of generality we assume $0<L(\cdot)<1$.
It then follows that the second integral on the right-hand side of \eqref{e:pr} is smaller than $\epsilon/2$ too.
Thus we complete the proof. $\square$
\bibliographystyle{siam}
\bibliography{jinglaibib,localbib}

\begin{thebibliography}{10}

\bibitem{lucor}
{\sc D.~Lucor A.~Birolleau, G.~Po\"{e}tte}, {\em Adaptive {B}ayesian inference
  for discontinuous inverse problems: application to hyperbolic conservation
  laws}, preprint,  (2013).

\bibitem{arnst2012measure}
{\sc M.~Arnst, R.~Ghanem, E.~Phipps, and J.~Red-Horse}, {\em Measure
  transformation and efficient quadrature in reduced-dimensional stochastic
  modeling of coupled problems}, International Journal for Numerical Methods in
  Engineering, 92 (2012), pp.~1044--1080.

\bibitem{Babuska2010}
{\sc Ivo Babuska, Fabio Nobile, and Raul Tempone}, {\em A stochastic
  collocation method for elliptic partial differential equations with random
  input data}, SIAM Review, 52 (2010), pp.~317--355.

\bibitem{balakrishnan2003uncertainty}
{\sc S.~Balakrishnan, A.~Roy, M.G. Ierapetritou, G.P. Flach, and P.G.
  Georgopoulos}, {\em Uncertainty reduction and characterization for complex
  environmental fate and transport models: An empirical {B}ayesian framework
  incorporating the stochastic response surface method}, Water Resources
  Research, 39 (2003), p.~1350.

\bibitem{beck1970nonlinear}
{\sc J.V. Beck}, {\em Nonlinear estimation applied to the nonlinear inverse
  heat conduction problem}, International Journal of heat and mass transfer, 13
  (1970), pp.~703--716.

\bibitem{beck1985inverse}
{\sc J.V. Beck, CR~St~Clair, and B.~Blackwell}, {\em Inverse heat conduction:
  Ill-posed problems}, John Wiley and Sons Inc., New York, NY, 1985.

\bibitem{besag1995bayesian}
{\sc J.~Besag, P.~Green, D.~Higdon, and K.~Mengersen}, {\em {B}ayesian
  computation and stochastic systems}, Statistical Science, 10 (1995),
  pp.~3--66.

\bibitem{MCMChandbook}
{\sc S.~Brooks, A.~Gelman, G.~L. Jones, and X.-L. Meng}, eds., {\em Handbook of
  Markov Chain Monte Carlo}, Chapman \& Hall/CRC, 2011.

\bibitem{carasso1982determining}
{\sc A.~Carasso}, {\em Determining surface temperatures from interior
  observations}, SIAM Journal on Applied Mathematics, 42 (1982), pp.~558--574.

\bibitem{chen2000monte}
{\sc M.~H. Chen, Q.~M. Shao, and J.~G. Ibrahim}, {\em {M}onte {C}arlo methods
  in {B}ayesian computation}, Springer Verlag, 2000.

\bibitem{conrad:2012:asp}
{\sc P.~Conrad and Y.~Marzouk}, {\em Adaptive {S}molyak pseudospectral
  approximation}, {SIAM} Journal on Scientific Computing,  (2012).
\newblock submitted.

\bibitem{constantine:2012:spa}
{\sc P.~G. Constantine, M.~S. Eldred, and E.~T. Phipps}, {\em {Sparse
  Pseudospectral Approximation Method}}, Computer Methods in Applied Mechanics
  and Engineering, 229--232 (2012), pp.~1--12.

\bibitem{cui2011bayesian}
{\sc T~Cui, C~Fox, and MJ~O'Sullivan}, {\em Bayesian calibration of a
  large-scale geothermal reservoir model by a new adaptive delayed acceptance
  metropolis hastings algorithm}, Water Resources Research, 47 (2011).

\bibitem{de2005tutorial}
{\sc P.T. De~Boer, D.P. Kroese, S.~Mannor, and R.Y. Rubinstein}, {\em A
  tutorial on the cross-entropy method}, Annals of Operations Research, 134
  (2005), pp.~19--67.

\bibitem{Evans:2002}
{\sc S.~N. Evans and P.~B. Stark}, {\em Inverse problems as statistics},
  Inverse Problems, 18 (2002), pp.~R55--R97.

\bibitem{frangos:2010:srm}
{\sc M.~Frangos, Y.~Marzouk, K.~Willcox, and B.~van Bloemen~Waanders}, {\em
  Surrogate and reduced-order modeling: a comparison of approaches for
  large-scale statistical inverse problems}, in Computational Methods for
  Large-Scale Inverse Problems and Quantification of Uncertainty, L.~Biegler,
  G.~Biros, O.~Ghattas, M.~Heinkenschloss, D.~Keyes, B.~Mallick, Y.~Marzouk,
  L.~Tenorio, B.~van Bloemen~Waanders, and K.~Willcox, eds., Wiley, 2010.

\bibitem{Galbally09}
{\sc D.~Galbally, K.~Fidkowski, K.~Willcox, and O.~Ghattas}, {\em Nonlinear
  model reduction for uncertainty quantification in large-scale inverse
  problems}, International Journal for Numerical Methods in Engineering, 81
  (2010), pp.~1581--1608.

\bibitem{ghanem2003stochastic}
{\sc R.~G. Ghanem and P.D. Spanos}, {\em Stochastic finite elements: a spectral
  approach}, Dover, 2003.

\bibitem{gilks1996markov}
{\sc W.R. Gilks, S.~Richardson, and D.J. Spiegelhalter}, {\em {M}arkov chain
  {M}onte {C}arlo in practice}, Chapman \& Hall/CRC, 1996.

\bibitem{haario:2006:dra}
{\sc H.~Haario, M.~Laine, A.~Mira, and Eero Saksman}, {\em {DRAM}: Efficient
  adaptive {MCMC}}, Statistics and Computing, 16 (2006), pp.~339--354.

\bibitem{ihlerkernel}
{\sc A.~Ihler and M.~Mandel}, {\em Kernel density estimation toolbox for
  matlab}.
\newblock http://www.ics.uci.edu/~ihler/code/kde.html.

\bibitem{kaipio2005statistical}
{\sc J.P. Kaipio and E.~Somersalo}, {\em Statistical and computational inverse
  problems}, Springer, 2005.

\bibitem{keith2008adaptive}
{\sc J.~Keith, D.~Kroese, and G.~Sofronov}, {\em Adaptive independence
  samplers}, Statistics and Computing, 18 (2008), pp.~409--420.

\bibitem{RSSB:RSSB294}
{\sc M.~C. Kennedy and A.~O'Hagan}, {\em {B}ayesian calibration of computer
  models}, Journal of the Royal Statistical Society: Series B, 63 (2001),
  pp.~425--464.

\bibitem{LiuMC:2001}
{\sc J.~S. Liu}, {\em Monte Carlo Strategies in Scientific Computing}, Springer
  Verlag, 2001.

\bibitem{macayeal1991paleothermometry}
{\sc D.~R. Macayeal, J.~Firestone, and E.~Waddington}, {\em Paleothermometry by
  control methods}, Journal of Glaciology, 37 (1991), pp.~326--338.

\bibitem{mackay:2006:eoi}
{\sc D.~MacKay}, {\em Information Theory, Inference, and Learning Algorithms},
  Cambridge University Press, 2002.

\bibitem{Malinverno:2002}
{\sc A.~Malinverno}, {\em Parsimonious {B}ayesian {M}arkov chain {M}onte
  {C}arlo inversion in a nonlinear geophysical problem}, Geophysical Journal
  International, 151 (2002), pp.~675--688.

\bibitem{Marzouk2009}
{\sc Y.~M. Marzouk and H.~N. Najm}, {\em Dimensionality reduction and
  polynomial chaos acceleration of {B}ayesian inference in inverse problems},
  Journal of Computational Physics, 228 (2009), pp.~1862--1902.

\bibitem{Marzouk2007}
{\sc Y.~M. Marzouk, H.~N. Najm, and L.~A. Rahn}, {\em Stochastic spectral
  methods for efficient {B}ayesian solution of inverse problems}, Journal of
  Computational Physics, 224 (2007), pp.~560--586.

\bibitem{Marzouk2009a}
{\sc Y.~M. Marzouk and D.~Xiu}, {\em A stochastic collocation approach to
  {B}ayesian inference in inverse problems}, Communications in Computational
  Physics, 6 (2009), pp.~826--847.

\bibitem{meyn1993markov}
{\sc S.P. Meyn, R.L. Tweedie, and P.W. Glynn}, {\em {M}arkov chains and
  stochastic stability}, Springer London et al., 1993.

\bibitem{Mosegaard:2002}
{\sc K.~Mosegaard and M.~Sambridge}, {\em {M}onte {C}arlo analysis of inverse
  problems}, Inverse Problems, 18 (2002), pp.~R29--R54.

\bibitem{Najm:2008a}
{\sc H.~N. Najm, B.J. Debusschere, Y.M. Marzouk, S.~Widmer, and O.P.~Le
  Maitre}, {\em {Uncertainty Quantification in Chemical Systems}}, Int. J. Num.
  Meth. Eng.,  (2008).
\newblock in review.

\bibitem{Nguyen_SantaFE08}
{\sc N.~C. Nguyen, G.~Rozza, D.B.P. Huynh, and A.~T. Patera}, {\em Reduced
  basis approximation and a posteriori error estimation for parametrized
  parabolic {PDEs}: Application to real-time bayesian parameter estimation}, in
  Computational Methods for Large-Scale Inverse Problems and Quantification of
  Uncertainty, L.~Biegler, G.~Biros, O.~Ghattas, M.~Heinkenschloss, D.~Keyes,
  B.~Mallick, Y.~Marzouk, L.~Tenorio, B.~van Bloemen~Waanders, and K.~Willcox,
  eds., Wiley, 2010.

\bibitem{Ozisik2000}
{\sc M.N. {\"O}zisik and H.R.B. Orlande}, {\em Inverse heat transfer:
  fundamentals and applications}, Hemisphere Pub, 2000.

\bibitem{PetraGhattas}
{\sc N.~Petra, J.~Martin, G.~Stadler, and O.~Ghattas}, {\em A computational
  framework for infinite-dimensional {B}ayesian inverse problems: stochastic
  {N}ewton {MCMC} with application to ice sheet inverse problems}, preprint,
  (2013).

\bibitem{petras2003smolyak}
{\sc K.~Petras}, {\em Smolyak cubature of given polynomial degree with few
  nodes for increasing dimension}, Numerische Mathematik, 93 (2003),
  pp.~729--753.

\bibitem{robert:2004:mcs}
{\sc C.~P. Robert and G.~Casella}, {\em {M}onte {C}arlo Statistical Methods},
  Springer Verlag, 2004.

\bibitem{rubinstein2004cross}
{\sc R.Y. Rubinstein and D.P. Kroese}, {\em The cross-entropy method: a unified
  approach to combinatorial optimization, Monte-Carlo simulation, and machine
  learning}, Springer-Verlag, New York, 2004.

\bibitem{soize:395}
{\sc Christian Soize and Roger Ghanem}, {\em Physical systems with random
  uncertainties: Chaos representations with arbitrary probability measure},
  SIAM Journal on Scientific Computing, 26 (2004), pp.~395--410.

\bibitem{stuart2010inverse}
{\sc A.~M. Stuart}, {\em Inverse problems: a {B}ayesian perspective}, Acta
  Numerica, 19 (2010), pp.~451--559.

\bibitem{tarantola2005inverse}
{\sc A.~Tarantola}, {\em Inverse problem theory and methods for model parameter
  estimation}, Society for Industrial Mathematics, 2005.

\bibitem{tierney1994markov}
{\sc L.~Tierney}, {\em {M}arkov chains for exploring posterior distributions},
  the Annals of Statistics, 22 (1994), pp.~1701--1728.

\bibitem{turner2008two}
{\sc R.~Turner, P.~Berkes, and M.~Sahani}, {\em Two problems with variational
  expectation maximisation for time-series models}, in Workshop on Inference
  and Estimation in Probabilistic Time-Series Models, vol.~2, 2008.

\bibitem{Wan2006}
{\sc X.~Wan and G.~Karniadakis}, {\em Multi-element generalized polynomial
  chaos for arbitrary probability measures}, SIAM Journal on Scientific
  Computing, 28 (2006), pp.~901--928.

\bibitem{wang2005hierarchical}
{\sc J.~Wang and N.~Zabaras}, {\em Hierarchical {B}ayesian models for inverse
  problems in heat conduction}, Inverse Problems, 21 (2005), p.~183.

\bibitem{Wang200515}
{\sc J.~Wang and N.~Zabaras}, {\em Using {B}ayesian statistics in the
  estimation of heat source in radiation}, International Journal of Heat and
  Mass Transfer, 48 (2005), pp.~15--29.

\bibitem{williams2006combining}
{\sc B.~Williams, D.~Higdon, J.~Gattiker, L.~Moore, M.~McKay, and
  S.~Keller-McNulty}, {\em Combining experimental data and computer
  simulations, with an application to flyer plate experiments}, Bayesian
  Analysis, 1 (2006), pp.~765--792.

\bibitem{xiu2010numerical}
{\sc D.~Xiu}, {\em Numerical Methods for Stochastic Computations: A Spectral
  Method Approach}, Princeton, 2010.

\bibitem{Xiu2002a}
{\sc D.~Xiu and G.~Karniadakis}, {\em The {W}iener-{A}skey polynomial chaos for
  stochastic differential equations}, SIAM Journal on Scientific Computing, 24
  (2002), pp.~619--644.

\end{thebibliography}

%Last updated on \today.

\end{document}